\newtheorem{lem}{Lemma}
\begin{document}
\title{Joint User Selection, Power Allocation, and Precoding Design with Imperfect CSIT for Multi-Cell MU-MIMO Downlink Systems}
%\title{Limits of Linear Processing for Large-Scale MU-MIMO Systems Using Principal Component Analysis }

\author{\IEEEauthorblockN{Jiwook Choi, Namyoon Lee, Song-Nam Hong, and Giuseppe Caire}\\
	\thanks{J. Choi and N. Lee are with the Department of Electrical Engineering, POSTECH, Pohang, Gyeongbuk 37673, South Korea  (e-mail: \{jiwook, nylee\}@postech.ac.kr).}
	\thanks{S.-N. Hong is with the Department of Electrical and Computer Engineering, Ajou University, Suwon, Gyeonggi 16499, South Korea (e-mail: snhong@ajou.ac.kr).}
		\thanks{G. Caire is with the Department of Electrical and Computer Engineering, Technical University of Berlin, Berlin, Germany, (e-mail: caire@tu-berlin.de).}
			\thanks{A part of this paper was presented in \cite{Choi_Lee_Hong_Caire}.}
%\IEEEauthorblockA{
%Department of Electrical Engineering, POSTECH, \\Pohang. Gyeongbuk, Korea\\
%Emails: \{nylee,jiwook\}@postech.ac.kr} \\
%\IEEEauthorblockA{Ajou University, Suwon, Korea,\\
%              email:  snhong@ajou.ac.kr}\\
%\IEEEauthorblockA{ Technical University of Berlin, Berlin, Germany,\\
%              email: caire@tu-berlin.de}\\
}

%\author{\IEEEauthorblockN{Jiwook Choi, Namyoon Lee, Song-Nam Hong, and Giuseppe Caire}\\
%\IEEEauthorblockA{
%Department of Electrical Engineering, POSTECH, \\Pohang. Gyeongbuk, Korea\\
%Emails: \{jiwook,nylee\}@postech.ac.kr} \\
%\IEEEauthorblockA{Ajou University, Suwon, Korea,\\
%              email: snhong@ajou.ac.kr}\\
%\IEEEauthorblockA{ Technical University of Berlin, Berlin, Germany,\\
%              email: caire@tu-berlin.de}\\
%}
%Wireless Communications and Sensing Research Laboratory\\
\maketitle
\begin{abstract}
In this paper, a new optimization framework is presented for the joint design of user selection, power allocation, and precoding in multi-cell multi-user multiple-input multiple-output (MU-MIMO) systems when imperfect channel state information at transmitter (CSIT) is available. By representing the joint optimization variables in a higher-dimensional space, the weighted sum-spectral efficiency maximization is formulated as the maximization of the product of Rayleigh quotients. Although this is still a non-convex problem, a computationally efficient algorithm, referred to as generalized power iteration precoding (GPIP), is proposed. The algorithm converges to a stationary point (local maximum) of the objective function and therefore it guarantees the first-order optimality of the solution. By adjusting the weights in the weighted sum-spectral efficiency, the GPIP yields a joint solution for user selection, power allocation, and downlink precoding. The GPIP is also extended to a multi-cell scenario, where cooperative base stations perform joint user selection and design their precoding vectors by sharing global yet imperfect CSIT within the cooperative BSs. System-level simulations show the gains of the proposed approach with respect to conventional user selection and linear downlink precoding.

\end{abstract}
\IEEEpeerreviewmaketitle
\section{Introduction}

\subsection{Motivation}
In cellular networks, a major bottleneck in achieving a high spectral efficiency is interference. Compared to conventional single-antenna downlink cellular networks, a multi-user multiple-input multiple-output (MU-MIMO) downlink cellular network suffers from inter-user-interference (IUI) in addition to the usual inter-cell-interference (ICI) \cite{Marzetta_2010,Lozano_2013,Lee_Lozano_2014}. To alleviate the interference in MU-MIMO systems, accurate knowledge of channel state information (CSI) at base stations (BSs) is indispensable. In practice, however, obtaining perfect CSI at the BSs is infeasible in MU-MIMO systems. In frequency division duplex (FDD) systems, imperfect CSI can be obtained via downlink training and feedback, and its accuracy is determined by the feedback rate and by the delay introduced by the probing and feedback loop \cite{Love_2008,Jindal_2006,Yoo_2006}. In time division duplex (TDD) systems, the BSs acquire CSI from the uplink pilot signals with appropriate transceiver hardware calibration \cite{Hassibi_Hochwald_2003,Caire_Ravindran_2010,Jose_Vishwanath_2011,Yin_Gesbert_2013}. In particular, when non-orthogonal uplink pilots are used across multiple cells, the uplink channel measurements are contaminated by co-pilot interference across different cells. This effect is particularly significant in massive MIMO systems, in which the number of antennas at the BS, $N$, is much larger than the number of downlink data streams, $K$, i.e., $N\gg K$ \cite{Jose_Vishwanath_2011,Yin_Gesbert_2013}.

In MU-MIMO downlink cellular networks, finding a jointly optimal user selection, precoding, and power allocation solution that maximizes the weighted sum-spectral efficiency is a very challenging problem even with the case of perfect CSI at transmitter (CSIT). The major hindrance in the design is that the downlink signal-to-interference-plus-noise-ratio (SINR) of a user depends on the set of scheduled users and precoding/power allocation of the other users; thereby, the SINRs of the users are interwoven with each other. The joint optimization of user-selection, power allocation, and precoding vectors in order to maximize the weighted sum-spectral efficiency is known to be NP-hard\cite{Luo_Zhang_2008}. When considering imperfect CSIT, the problem becomes even more complicated since an exact expression of the achievable spectral efficiency in the presence of non-perfect CSIT is not available, and one must resort to some bound on the corresponding ergodic spectral efficiency \cite{Caire_2018}. In addition, the effect of the CSIT errors should be appropriately taken into account in the optimization problem, so that the solution is robust to imperfect CSIT. In this paper, we present a novel optimization framework that provides a computationally efficient heuristic solution of the weighted sum-spectral efficiency maximization problem under  imperfect CSIT.

\subsection{Prior Works}
In the context of single-cell MU-MIMO systems, the user selection problem for a given precoding strategy has been extensively studied in the past decade \cite{Yoo_Goldsmith_2005,Dimic_2005}. Semi-orthogonal user selection with zero-forcing precoding (SUS-ZF) \cite{Yoo_Goldsmith_2005} is perhaps the most representative and widely used. The key idea of SUS-ZF is to find a set of users whose channel directions are nearly orthogonal (semi-orthogonal) to achieve a high sum-spectral efficiency with ZF precoding. This method was shown to achieve the same scaling law of the sum-spectral efficiency obtained by ZF-dirty-paper-coding (ZF-DPC) \cite{Caire_Shamai_2003} when $K$ is sufficiently larger than $N$, i.e., $N/K\ll 1$ with computational complexity $\mathcal{O}\left(KN^2\right)$. Prior studies  \cite{Yoo_Goldsmith_2005,Dimic_2005}, however, focused on the user selection problem for a fixed precoding strategy instead of the joint design of them conjunction with power allocation. The use of uplink-downlink duality is another common approach to find a joint solution of precoding vectors and power allocation in single-cell MU-MIMO systems with perfect CSIT \cite{Viswanath_Tse_2003,Viswanath_Goldsmith_2003}. Using this approach, the optimal precoding and power allocation solution to minimize the total transmit power subject to SINR constraints \cite{Schubert_Boche_2004,Yu_Lan_2007} was proposed. The use of this duality-based precoding and power allocation was applied together with a heuristic user selection algorithm \cite{Song_Cruz_2008}. While the total transmit power minimization subject to individual SINR constraints is a convex problem, it is well-known that the sum-spectral efficiency maximization subject to a total power constraint under linear precoding is non-convex, even in the case of perfect CSIT. In \cite{Stojnic_Hassibi_2006}, a heuristic gradient update for the maximization of the sum-spectral efficiency that obtains directly the precoding vectors and the corresponding user power allocation was proposed and shown to converge to a local maximum. This approach, however, is not applicable to imperfect CSIT and/or a multi-cell environment. Considering the multi-cell MU-MIMO setting, we should distinguish between different levels of cooperation.  When all the antennas of the BSs are jointly precoded in a centralized fashion, the system reduces to a single giant cell with distributed antennas (the so-called cloud radio access network (C-RAN) architecture \cite{Aguerri_2017}).  In this paper, we consider an intermediate form of cooperation where the BSs share their CSIT and jointly optimize the beam forming vectors, while each BS serves uniquely its own users. In this case, BS cooperation restricts to coping with inter-cell interference. For this scenario, in \cite{Gesbert_Yu_2010,Huh_Caire_2011,Huh_Caire_2012,Huh_Caire_Ramprashad_2012,Yu_Kwon_2013,Huang_Rao_2013}, joint user selection, power allocation, and precoding algorithms for BS cooperation have been proposed to effectively mitigate inter-cell-interference. These algorithms can be applicable to the case of multi-cell MU-MIMO systems using single-cell operation. The major limitation of these studies, however, is that they assumed perfect CSIT; yet, imperfect CSIT assumption is more practically relevant.  

  Under the imperfect CSIT assumption, robust MU-MIMO transmission strategies have been extensively studied in \cite{Joudeh_Clerckx_2016_TC, Joudeh_Clerckx_2016_TSP, Dai_Clerckx_2016, MAT_2012, Lee_Heath_2014,Lee_Heath_2015}. In \cite{MAT_2012, Lee_Heath_2014,Lee_Heath_2015}, a set of new transmission strategies proposed when CSIT is completely-delayed \cite{MAT_2012} or moderately-delayed \cite{ Lee_Heath_2014,Lee_Heath_2015}. The underlying limitation of these studies is that they optimize the degrees-of-freedom (DoF) for the MU-MIMO systems, which is of limited use in practical finite SNR conditions. In addition, using a rate-splitting approach, linear precoding methods were developed under the imperfect CSIT assumption \cite{Joudeh_Clerckx_2016_TC, Joudeh_Clerckx_2016_TSP, Dai_Clerckx_2016}. All the aforementioned studies, however, only focused on the precoding design; the user selection and power allocation methods are not jointly taken into account together with multi-user precoding. 

\subsection{Contributions}
 We consider a multi-cell downlink MU-MIMO system in which a BS equipped with $N$ antennas serves $K$ downlink users, each with a single-antenna. The main contributions of this paper are summarized as follows:
 
\begin{itemize}
	\item  We propose a novel optimization framework to jointly solve the user selection, power allocation, and precoding design problem for multi-cell MU-MIMO downlink systems. Specifically, using the concept of generalized mutual information (GMI) introduced in \cite{Shamai_2002,    Ding_Blostein_2010, Medard_2000,Yoo_Goldsmith_2006}, we derive a lower bound of the weighted sum-spectral efficiency when each BS has knowledge of imperfect and local CSIT for its own downlink users.  Then, the proposed optimization framework is to reformulate the weighted sum-spectral efficiency maximization (WSM) problem (which is well-known as an integer-mixed optimization problem in \cite{Luo_Zhang_2008}) into the maximization problem of the product of the Rayleigh quotients. Although this reformulated optimization problem is still a non-convex optimization problem, it is more tractable by representing the joint optimization variables in a higher-dimensional space, which allows to effectively remove the integer constraints in the user-selection sub-problem. 
	\item We also present an algorithm that quickly converges to a solution satisfying the first-order optimality condition of the reformulated optimization problem. To accomplish this, we derive the first-order Karush-Kuhn-Tucker (KKT) condition of our reformulated problem. The derived KKT condition constitutes a set of $NK$ non-linear equations with $NK$ unknown variables; finding the solution that satisfies the KKT condition, in general, needs a very high computational complexity. By interpreting the problem that finds the solution as a class of functional generalized eigenvalue problems, we propose a computationally efficient algorithm, which is referred to as the \textit{generalized power iteration precoding  (GPIP)}. The key idea of the GPIP is to find the principal component of the functional generalized eigenvalue problem in an iterative fashion. One important remark is that the proposed GPIP method quickly converges to the solution with the first-order optimality for the joint design problem of user selection, power allocation, and precoding. 
	%In addition, we show that the computational complexity of the power iteration precoding is the order of $\mathcal{O}\left(JKN^2\right)$ by harnessing some special structures in the optimization problem, where $J$ is the number of iterations. This complexity order is the same as that of ZF precoding $\mathcal{O}\left(K^3\right)$ when $K=N$, which does not perform user selection and power control in contrast to the proposed method.
	
	\item In addition, we extend our optimization framework to a multi-cell cooperation scenario, in which a set of cooperative BSs takes inter-cell interference into account as colored noise and shares imperfect CSIT within the cooperative BS cluster. Similar to the non-cooperative transmission case, we reformulate the WSM problem into the maximization of the product of Rayleigh quotients for the multi-cell cooperative transmission. This fact shows that our optimization framework is applicable to a more general scenario regardless of the number of antennas per BS, users per cell, and cooperative BSs. By modifying the proposed GPIP algorithm for the non-cooperative transmission, we present a multi-cell precoding method, which jointly finds a set of cooperatively scheduled users, power allocation, and precoding solutions to effectively control inter-cluster-interference under the individual BS power constraint.
	
	\item By simulations, we demonstrate that, in the case of the single-cell MU-MIMO system with perfect CSIT, the proposed algorithm achieves the same sum-spectral efficiency with that of ZF-DPC in a low SNR regime regardless of the number of BS antennas, $N$, and the number of users, $K$. This result implies that linear precoding can be sufficient to achieve a near sum-capacity under a certain condition, provided that user selection and power allocation are jointly performed with linear precoding. Considering the case of imperfect CSIT, we also demonstrate that the proposed algorithm offers a considerable sum-spectral efficiency gain over the existing scheduling and precoding methods by simulations. To gauge the gains of the proposed solution in practical systems, we evaluate the ergodic sum-spectral efficiency through system-level-simulations. It is observed that the proposed algorithm provides a noticeable spectral efficiency gain over the conventional user-selection and precoding methods under imperfect CSIT from a system-level perspective. We show that the sum-spectral efficiency improves when the number of users per cell increases for a fixed BS antennas and when the number of BS antennas increases for a fixed number of users per cell. Lastly, we demonstrate that the proposed multi-cell cooperative transmission method provides a significant spectral efficiency gain over the non-cooperative transmission methods under imperfect CSIT. 
	
\end{itemize}

% 
% \subsection{Organization and Notation}
% 
% 
%The rest of the paper is organized as follows. The system model and CSIT assumptions are described in Section II. In Section III, the proposed optimization framework is formulated. The weighted sum-spectral efficiency maximization algorithm is proposed, and important remarks of the proposed algorithm are presented in Section IV. Simulation results are presented in Section V, and Section VI concludes the paper with possible future work.
%
%Boldface uppercase letters denote matrices, boldface low-ercase letters denote column vectors and standard letters denote scalars. The superscrips $(\cdot)^{\sf T}$ and $(\cdot)^{\sf H}$ denote transpose and conjugate-transpose (Hermitian) operators, respectively. $\mathbb{E}_X[\cdot]$ denotes the expectation w.r.t the random variable $X$.

 %%%%%%%%%%%%%%%%%%%%%%%%%%%%%%%%%%%%%%
\section{System Model}
This section presents a multi-cell MU-MIMO system model and the corresponding ergodic sum-spectral efficiency.
 
\subsection{Network and Channel Model}
We consider a MU-MIMO cellular network consisting of $L$ cells. Each cell consists of one BS equipped with $N$ antennas and $K$ users equipped with a single antenna. We denote the downlink channel vector from the $j$th BS to the $k$th user in the $\ell$th cell by ${\bf h}_{j,\ell,k}\in\mathbb{C}^{N\times 1}$, and it is assumed to be a Rayleigh fading process, i.e., ${\bf h}_{j,\ell,k} \sim \mathcal{CN}\left({\bf 0},{\bf R}_{j,\ell,k}\right)$ where ${\bf R}_{j,\ell,k}=\mathbb{E}\left[{\bf h}_{j,\ell,k}{\bf h}_{j,\ell,k}^{\sf H}\right]\in\mathbb{C}^{N\times N}$ is the channel correlation matrix. This channel correlation matrix captures the macroscopic effects of the channel.  We consider a spatially correlated channel model to reflect the spatial correlation effect among BS antennas.  In particular, a geometric one-ring scattering model is considered as in \cite{Clerckx2013_book}. We denote the azimuth angle and the angular spread of the $k$th user in the $\ell$th cell with respect to the orientation perpendicular to the array axis of the $j$th antennas by $\theta_{j,\ell,k}$ and $\Delta_{j,\ell,k}$, respectively. In addition, the average large-scale fading from the $j$th BS to the $k$th user in the $\ell$th cell is represented by $\beta_{j,\ell,k}$. Then, the channel correlation coefficients between the $n$th and the $m$th antennas is given by
\begin{align} \label{eq:One_ring_channel_model}
\left[ {\bf R}_{j,\ell,k}  \right]_{n,m} = \frac{ \beta_{j,\ell,k} }{2\Delta_{j,\ell,k}} \int^{\theta_{j,\ell,k} + \Delta_{j,\ell,k} }_{\theta_{j,\ell,k} - \Delta_{j,\ell,k}} e^{-j \frac{2 \pi}{\lambda} \Psi(\alpha)( {\bf r}_{j,n} - {\bf r}_{j,m}  )} {\rm d}\alpha,
\end{align}
where $\Psi(\alpha) = \left[ \cos(\alpha), \sin(\alpha) \right]$ is the wave vector for a planer wave impinging with the angle of $\alpha$, $\lambda$ is the wavelength, and ${\bf r}_{j,n} = \left[ x_{j,n}, y_{j,n} \right]^{\top}$ is the position vector for the $n$th antenna of the $j$th BS. The eigenvectors and eigenvalues of ${\bf R}_{j,\ell,k}$ contain the spatial correlation information of the channel.

% With the Karhunen-Loeve model, the downlink channel from the $j$th BS to the $k$th user in the $\ell$th cell can be represented by
% \begin{align}
% {\bf h}_{j,\ell,k}^{\sf H}
% \end{align}

\subsection{CSIT Assumption}

We assume a block fading model, in which the downlink channel state ${\bf H}_{\ell,\ell}=\left[{\bf h}_{\ell,\ell,1}, \ldots,{\bf h}_{\ell,\ell,K}  \right] \in \mathcal{C}^{N \times K}$ changes independently over each transmission block, while it keeps a constant within a transmission block. The probability density function is denoted by $f_{\sf H}\left({\bf H}_{\ell,\ell}\right)$. It is assumed that users are able to estimate CSI with sufficiently high accuracy, i.e., perfect CSI at receiver (CSIR). The CSIR can be obtained by a completely standard pilot-sided coherent detector, where a very small amount of downlink pilot symbols are sent to each precoded user data stream, as prescribed today in the LTE and 5G standards. Whereas, the BS is assumed to have limited knowledge of downlink CSIT, i.e., ${\bf \hat H}_{\ell,\ell}=\left[{\bf \hat h}_{\ell,\ell,1}, \ldots,{\bf \hat h}_{\ell,\ell,K}  \right]$. In practice, this limited CSIT knowledge is acquired by quantized feedback in FDD systems \cite{Love_2008,Jindal_2006,Yoo_2006} and by uplink training in TDD systems thanks to the channel reciprocity \cite{Hassibi_Hochwald_2003,Jose_Vishwanath_2011,Yin_Gesbert_2013}. The joint fading process is assumed to be stationary and ergodic with a given first-order joint marginal distribution of $\left({\bf H}_{\ell,\ell},{\bf \hat H}_{\ell,\ell}\right)$ \cite{Caire_Kumar_2007}.

Let ${\bf  \hat h}_{\ell,\ell,k}$ be the minimum mean square error (MMSE) estimate of downlink channel ${\bf h}_{\ell,\ell,k}$. Then, we model imperfect CSIT with error ${\bf e}_{\ell,\ell,k}$ as
\begin{align}
	{\hat {\bf h}}_{\ell,\ell,k} = {\bf h}_{\ell,\ell,k} -  {\bf e}_{\ell,\ell,k}.
\end{align}
Since ${\bf h}_{\ell,\ell,k}$ is assumed to be Gaussian, ${\hat {\bf h}}_{\ell,\ell,k}$ and ${\bf e}_{\ell,\ell,k}$ are jointly Gaussian; thereby, ${\bf e}_{\ell,\ell,k}$ is independent of ${\bf \hat h}_{\ell,\ell,k}$. Then, the distribution of the CSIT error, $f_{{\sf e}}\left({\bf e}_{\ell,\ell,k}\right)$, is characterized by the conditional density function $f_{{\sf h}|{\sf \hat h}}\left({\bf h}_{\ell,\ell,k}|{\bf \hat h}_{\ell,\ell,k}\right)$. From the standard theory of MMSE estimation of Gaussian random vectors, we have that the estimation error ${\bf e}_{\ell,\ell,k}$ is Gaussian with mean zero and given covariance matrix ${\bf \Phi}_{\ell,\ell,k}$. By the orthogonality principle, this covariance matrix is given by ${\bf \Phi}_{\ell,\ell,k} = {\bf R}_{\ell,\ell,k} - {\bf {\hat R}}_{\ell,\ell,k}$, where ${\bf {\hat R}}_{\ell,\ell,k}$ is the covariance of ${\bf {\hat h}}_{\ell,\ell,k}$. This imperfect CSIT model is particularly suitable for TDD MU-MIMO systems. Thanks to the channel reciprocity, the downlink channel can be estimated by the orthogonal uplink pilot transmission across multiple cells in the TDD MU-MIMO systems. When the MMSE estimation is applied to estimate the uplink channel ${\bf h}_{\ell,\ell,k}$, the error covariance matrix ${\bf \Phi}_{\ell,\ell,k}$ is obtained as a function of the spatial correlation matrix of the channels, ${\bf  R}_{\ell,\ell,k}$, uplink transmit power $p^{\rm ul}$, and the length of uplink pilot sequence, $\tau^{\rm ul}$. For example, assuming that the same set of mutually orthogonal pilot sequences is reused in each cell, the covariance matrix of the channel estimation error can be represented as  ${\bf \Phi}_{\ell,\ell,k}= {\bf R}_{\ell,\ell,k}-   {\bf  R}_{\ell,\ell,k}  \left(  \sum_{j = 1}^{L}{\bf  R}_{\ell,j,k}  +\frac{\sigma^2}{\tau^{\rm ul}p^{\rm ul}}{\bf I}_{ N}\right)^{-1} {\bf  R}_{\ell,\ell,k}$ \cite{Yin_Gesbert_2013}.

 {\bf Remark 1 (Imperfect CSIT model for FDD MU-MIMO systems):} For the case of FDD MU-MIMO systems, the accuracy of the CSIT error, ${\bf e}_{\ell,\ell,k}$, is mainly determined by the amount of feedback bits to quantize the downlink channel \cite{Jindal_2006,Yoo_2006}. Specifically, let ${\bf \Lambda}_{\ell,\ell,k}$ be a diagonal matrix containing the non-zero eigenvalues of the spatial correlation matrix ${\bf R}_{\ell,\ell,k}$, and ${\bf U}_{\ell,\ell,k}$ be the matrix of the associated eigenvectors. Then, the imperfect CSIT of the BS can be modeled as: 
\begin{align}
{\bf {\hat h}}_{\ell,\ell,k} = {\bf U}_{\ell,\ell,k} {\bf \Lambda}_{\ell,\ell,k}^{\frac{1}{2}} \left(\sqrt{1 - \kappa^2_{\ell,\ell,k} }{\bf g}_{\ell,\ell,k}  + \kappa_{\ell,\ell,k} {\bf v}_{\ell,\ell,k}   \right), 
\end{align}
where ${\bf g}_{\ell,\ell,k}$ and ${\bf v}_{\ell,\ell,k}$ have IID $\mathcal{CN}(0,1)$ entries, and $\kappa_{\ell,\ell,k} \in [0,1]$ indicates the quality of instantaneous CSIT. This imperfect CSIT model is also applicable to our optimization framework.

\subsection{Downlink Ergodic Spectral Efficiency}

We also denote the transmit signal of the $\ell$th BS at time slot $t$ by ${\bf x}_{\ell}[t] \in \mathbb{C}^{N \times 1}$ where $t\in\left[1, T_{\rm c}\right]$. Each BS independently supports the associated $K$ users by treating all other cell interference as an additional noise. The $\ell$th BS sends $K$ independent information symbols $\left\{x_{\ell,1}[t],\ldots,x_{\ell,K}[t]\right\}$ at time slot $t$ using linear precoding vectors $\{ {\bf f}_{\ell,1},\ldots, {\bf f}_{\ell,K}\}$. Each information symbol is assumed to be a Gaussian signal with zero mean and variance $P$, i.e., $x_{\ell,k}[t] \sim \mathcal{CN}(0,P)$, where $P$ is the total transmit power per cell. The linear precoding vectors at the $\ell$th BS are constructed as a function of imperfect CSIT $\left\{{\hat {\bf h}}_{\ell,\ell,1}, \ldots, {\hat {\bf h}}_{\ell,\ell,K}\right\}$, which  causes IUI in the downlink transmission. We also denote the transmit signal of the $\ell$th BS at time slot $t$ by ${\bf x}_{\ell}[t] \in \mathbb{C}^{N \times 1}$ where $t\in\left[1, T_{\rm c}\right]$. Then, the transmit signal of the $\ell$th BS at time slot $t$ is 
\begin{align}
	{\bf x}_{\ell}[t] = \sum_{k = 1}^{K} {\bf f}_{\ell,k} x_{\ell,k}[t],  
\end{align}
with the transmission power constraint per cell, i.e., $\sum_{k = 1}^{K} \| {\bf f}_{\ell,k} \|^2_2 = 1$. Note that the transmit power for the data symbol of the $k$th user in the $\ell$th cell is computed as $P\| {\bf f}_{\ell,k} \|^2_2$. The received signal of the $k$th user in the $\ell$th cell is
    	\begin{align}
     		y_{\ell,k}[t]=  {{\bf h}^{{\sf H}}_{\ell,\ell,k}}{\bf f}_{\ell,k}x_{\ell,k}[t] +\underbrace{\sum_{i\neq k}^K  {{\bf h}^{{\sf H}}_{\ell,\ell,k}}{\bf f}_{\ell,i}x_{\ell,i}[t]}_{\sf IUI} + \underbrace{\sum_{j\neq \ell}^L\sum_{i=1}^K  {\bf h}_{j,\ell,k}^{{\sf H}}{\bf f}_{j,i}x_{j,i}[t]}_{\sf ICI} +z_{\ell,k}[t], \label{eq:received}
    	\end{align} 
    	where $z_{\ell,k}[t] \sim \mathcal{CN}(0,\sigma^2)$ is the complex Gaussian noise with zero mean and variance $\sigma^2=\mathbb{E}\left[ \| z_{\ell,k}[t]\|^2 \right]$. Assuming that the $k$th user in the $\ell$th cell has the perfect knowledge of the precoded downlink channel state information, i.e.,  ${\bf h}_{\ell,\ell,k}^{\sf H}{\bf f}_{\ell,k}$, the ergodic achievable spectral efficiency of the $k$th user in the $\ell$th cell is
    	\begin{align}
    		{\bar R}_{\ell,k}=\mathbb{E}\left[\log_2\left(1+{\rm SINR}_{\ell,k}\right)\right],\label{eq:ergodic_SE}
    		\end{align}
    	 	where \begin{align}
    	 	{\rm SINR}_{\ell,k}=\frac{ |{{\bf h}^{{\sf H}}_{\ell,\ell,k}}{\bf f}_{\ell,k}|^2}{\sum_{i\neq k}^K   |{{\bf h}^{{\sf H}}_{\ell,\ell,k}}{\bf f}_{\ell,i}|^2 +  \sum_{j\neq \ell}^L\sum_{i=1}^K  |{\bf h}_{j,\ell,k}^{{\sf H}}{\bf f}_{j,i}|^2+\frac{{\sigma}^2}{P}},
    	\end{align} 
    	where $P$ is the total transmit power per cell. In \eqref{eq:ergodic_SE}, the expectations are taken over all fading terms including the desired, inter-user-interference, and inter-cell-interference links. 

\section{Problem Formulation}
In this section, we present a maximization problem for a weighted-sum of spectral efficiencies in multi-cell MU-MIMO systems when imperfect CSIT is available.  

\subsection{Instantaneous Spectral Efficiency Maximization Problem with Imperfect CSIT}
When designing user selection, power allocation, and precoding strategies using limited CSIT knowledge $\left\{{\hat {\bf h}}_{\ell,\ell,1}, \ldots, {\hat {\bf h}}_{\ell,\ell,K}\right\}$, it is impossible for the BS to exactly know the instantaneous downlink spectral efficiency per user, i.e., $\log_2\left(1+{\rm SINR}_{\ell,k}\right)$. This fact possibly makes the BS overestimate the instantaneous spectral efficiency, which leads to the transmission at an undecodable rate. Although the BS cannot perfectly predict the instantaneous rates, it can compute the instantaneous spectral efficiency per downlink user using imperfect CSIT, i.e., ${\bf \hat H}_{\ell,\ell}=\left\{{\hat {\bf h}}_{\ell,\ell,1}, \ldots, {\hat {\bf h}}_{\ell,\ell,K}\right\}$ by taking the expectation with respective to the CSIT error distribution $f_{{\sf h}|{\sf \hat h}}\left({\bf h}_{\ell,\ell,k}|{\bf \hat h}_{\ell,\ell,k}\right)$, which is defined as
\begin{align}
	R_{\ell,k}\left({\bf \hat H}_{\ell,\ell}\right) = \mathbb{E}_{ {\bf H}_{\ell,\ell} | {\bf {\hat H}}_{\ell,\ell}  } \left[\log_2\left(1+{\rm SINR}_{\ell,k}\right)\mid {\bf \hat H}_{\ell,\ell} \right].\label{eq:avg_SE}
\end{align}
 This average spectral efficiency is an instantaneous rate which captures the average rate over the CSIT error distribution when an estimate of CSIT is given. We refer to this as the instantaneous spectral efficiency with imperfect CSIT. Using this instantaneous spectral efficiency with imperfect CSIT, the BS is possible to calculate the ergodic spectral efficiency by taking the expectation over the estimated fading process of its own cell, namely, 
	 \begin{align} \label{eq:avg_SE_ESE}
	 {\tilde R}_{\ell,k} =\mathbb{E}_{ {\bf {\hat H}}_{\ell,\ell} }\left[ R_{\ell,k}\left({\bf \hat H}_{\ell,\ell}\right)  \right]. 
	\end{align}
	 This ergodic spectral efficiency differs from the ergodic spectral efficiency ${\bar R}_{\ell,k}$ in \eqref{eq:ergodic_SE}, because in  \eqref{eq:avg_SE_ESE} the average is taken over the estimated fading channels of its own cell, i.e., local and imperfect CSIT, by treating the inter-cell-interference as additional noise. Whereas, in \eqref{eq:ergodic_SE}, the averages are taken over all fading terms including the inter-cell-interference. Nevertheless, when $L$ BSs perform full-cooperation using global yet imperfect CSIT,  the ergodic spectral efficiencies defined in \eqref{eq:ergodic_SE} and \eqref{eq:avg_SE_ESE} become identical. Therefore, when solving joint user selection, power allocation, and precoding design problem, we focus on maximizing the weighted-sum of the instantaneous spectral efficiency using limited CSIT under the total power constraint in every fading state. 
	
	Let $\mathcal{K}_{\ell}=\{1,2,\ldots, K\}$ be a set of downlink user indices in the $\ell$th cell. Then, the power set of $\mathcal{K}_{\ell}$, which contains all collections of subsets of $\mathcal{K}_{\ell}$, is denoted by $\mathcal{P}(\mathcal{K}_{\ell})$ where its cardinality is $|\mathcal{P}(\mathcal{K}_{\ell})|=2^K$. We define $\mathcal{S}_{\ell, j}$ be the $j$th element of the power set $\mathcal{P}(\mathcal{K}_{\ell})$, where $j\in\left\{1,2,\ldots, 2^K\right\}$. Thus, $\mathcal{S}_{\ell, j}$ is the subset of $\mathcal{P}(\mathcal{K}_{\ell})$. We also define the $i$th element of the subset $\mathcal{S}_{\ell, j}$ by $\pi_{\mathcal{S}_{\ell,j}}(i)\in \{1,2,\ldots, K\}$. For example, when $K=2$, excepting the empty-set $\emptyset$, we have $2^K-1$ subsets in $\mathcal{P}(\mathcal{K}_{\ell})$, i.e., $\mathcal{S}_{\ell,1}=\{1\}$, $\mathcal{S}_{\ell,2}=\{2\}$, and $\mathcal{S}_{\ell,3}=\{1,2\}$. In addition, $\pi_{\mathcal{S}_{\ell,2}}(1)=2$ and $\pi_{\mathcal{S}_{\ell,3}}(1)=1$. Using these notations, the joint user-selection and the weighted sum-spectral efficiency maximization problem is formulated as
\begin{align}
& \underset{{\bf f}_{\ell,\pi_{\mathcal{S}_{\ell,j}}(1)},\ldots,{\bf f}_{\ell,\pi_{\mathcal{S}_{\ell,j}}(|\mathcal{S}_{\ell,j}|)}}{\text{arg~max}}~~ \underset{\mathcal{S}_{\ell,j} \in \mathcal{P}(\mathcal{K}_{\ell})\setminus \emptyset}{\text{arg~max}}
~~   \sum_{i=1}^{|\mathcal{S}_{\ell,j}|}w_{\ell,\pi_{\mathcal{S}_{\ell,j}}(i)}R_{\ell,\pi_{\mathcal{S}_{\ell,j}}(i)}\left({\bf \hat H}_{\ell,\ell}(\mathcal{S}_{\ell,j})\right)  \nonumber\\
& \text{subject to}
~~  \sum_{i=1}^{|\mathcal{S}_{\ell,j}|}\|{\bf f}_{\ell,\pi_{\mathcal{S}_{\ell,j}}(i)}\|_2^2\leq 1, \label{eq:Optimization_general}
\end{align} 
 where ${\bf \hat H}_{\ell,\ell}(\mathcal{S}_{\ell,j})$ is a row-reduced channel matrix that contains only the imperfect CSIT of the selected users in $\mathcal{S}_{\ell,j}$. In addition, $w_{\ell,\pi_{\mathcal{S}_{\ell,j}}(i)}$ is the weight allocated to the $\pi_{\mathcal{S}_{\ell,j}}(i)$th user in the $\ell$th cell. This weight parameter controls between the fairness of the users and the sum-spectral efficiency. For example, to maximize the lower bound of the sum-spectral efficiency, we can set $w_{\ell,\pi_{\mathcal{S}_{\ell,j}}(i)}=1$. In addition, $w_{\ell,\pi_{\mathcal{S}_{\ell,j}}(i)}$ can also be chosen as the inverse of the single-user capacity or using the proportional-fairness criterion to improve the fairness of the downlink user rates in the cell. To find a global optimal solution, we need to find the optimal precoding vectors and associated power allocation solutions for all possible scheduled user subsets $\mathcal{S}_{\ell,j}$. Since the size of the search space exponentially increases with the number of users $2^K-1$\footnote{When $K> N$, the size of the search space can be reduced to $K \choose N$.} and its computation complexity becomes prohibitive especially for large values of $K$.  In addition, for the optimally chosen user sets, finding the optimal precoding vectors and associated power allocation solutions is a non-convex problem.

 %Therefore, low-complexity suboptimal algorithms have been proposed in the literature in order to maximize the throughput solving (11) in two phases (class-B approach): first by finding a set  of quasiorthogonal users (combinatorial search) and second by allocating resources to such a set (convex optimization) 

 \subsection{A Lower Bound of Instantaneous Spectral Efficiency with Imperfect CSIT}   	
    	
We assume that the BS has imperfect CSIT of the users in the cell, i.e., $\left\{{{\bf \hat h}_{1}},\ldots, {{\bf \hat h}_{K}}\right\}$. Then, by using the fact that ${\hat {\bf h}}_{\ell,\ell,k}={\bf h}_{\ell,\ell,k} - {\bf e}_{\ell,\ell,k}$, the received signal of the $k$th user in the $\ell$th BS in \eqref{eq:received} is equivalently rewritten as
    	\begin{align}
    	  		y_{\ell,k}[t]&=\!  {{\bf \hat h}^{{\sf H}}_{\ell,\ell,k}}{\bf f}_{\ell,k}x_{\ell,k}[t] \!+\!\sum_{i\neq k}^K  {{\bf \hat h}^{{\sf H}}_{\ell,\ell,k}}{\bf f}_{\ell,i}x_{\ell,i}[t]\!+\!{{\bf e}^{{\sf H}}_{\ell,\ell,k}}\sum_{i=1}^K {\bf f}_{\ell,i}x_{\ell,i}[t]  \nonumber\\
     		&+ \sum_{j\neq \ell}^L\sum_{i=1}^K  {\bf h}_{j,\ell,k}^{{\sf H}}{\bf f}_{j,i}x_{j,i}[t] +z_{\ell,k}[t].
    	\end{align}
Computing the exact mutual information $I\left(\{x_{\ell,k}[t]\} ; \{y_{\ell,k}[t]\} \right)$ is a very challenging task because the interference plus noise is non-Gaussian due to imperfect CSIT.  In this case, GMI facilitates to estimate a lower bound of the instantaneous spectral efficiency with imperfect CSIT. Using this GMI, the estimate of the instantaneous spectral efficiency for the $k$th downlink user with limited knowledge of the channels, $\left\{{\bf \hat h}_{\ell,\ell,1},\ldots, {\bf \hat h}_{\ell,\ell,K}\right\}$, is
    	\begin{align}
    	 	 & R_{\ell,k}\left({\bf \hat H}_{\ell,\ell}\right)  \!\geq \! \log_2\!\left(\!1\!+\!\frac{ |{{\bf \hat h}^{{\sf H}}_{\ell,\ell,k}}{\bf f}_{\ell,k}|^2}{\sum_{i\neq k}^K\!|{\bf \hat h}^{{\sf H}}_{\ell,\ell,k}{\bf f}_{\ell,i}|^2 \!+\!\sum_{i=1}^K\! {\bf f}_{\ell,i}^{\sf H}{\bf \Phi}_{\ell,\ell,k}{\bf f}_{\ell,i}  \!+\! \frac{{\tilde \sigma}_{\ell,k}^2}{P} }\right),\label{eq:lower_bound}
\end{align}
where the interference leakage power due to imperfect CSIT is computed as 
\begin{align}
	{\bf f}_{\ell,k}^{\sf H}{\bf \Phi}_{\ell,\ell,k}{\bf f}_{\ell,k}P =\mathbb{E}\left[\left|{{\bf e}^{{\sf H}}_{\ell,\ell,k}}{\bf f}_{\ell,k}x_{\ell,k}[t]\right|^2\right]
\end{align} 
and ${\tilde \sigma}_{\ell,k}^2$ is the effective noise variance of the $k$th user in the $\ell$th cell, which includes the sum of the inter-cell-interference and noise power, i.e., ${\tilde \sigma}_{\ell,k}^2=\mathbb{E}\left[\sum_{j\neq \ell}^L\sum_{i=1}^K  \left|{\bf h}_{j,\ell,k}^{{\sf H}}{\bf f}_{j,i}x_{j,i}[t]\right|^2\right]+\sigma^2$.

Using \eqref{eq:lower_bound}, the lower bound of the weighted-sum of the instantaneous spectral efficiency with imperfect CSIT is reformulated as  	 	
\begin{align}
 \sum_{k=1}^Kw_{\ell,k}R_{\ell,k}\left({\bf \hat H}_{\ell,\ell}\right)& \geq \!\sum_{k=1}^Kw_{\ell,k}\!\log_2\!\left(\!1\!+\!\frac{ |{{\bf \hat h}^{{\sf H}}_{\ell,\ell,k}}{\bf f}_{\ell,k}|^2}{\sum_{i\neq k}^K\!|{\bf \hat h}^{{\sf H}}_{\ell,\ell,k}{\bf f}_{\ell,i}|^2 \!+\!\sum_{i=1}^K{\bf f}_{\ell,i}^{\sf H}{\bf \Phi}_{\ell,\ell,k}{\bf f}_{\ell,i} \!+\! \frac{{\tilde \sigma}^2_{\ell,k}}{P} }\!\right) \nonumber \\
& =\!\sum_{k=1}^Kw_{\ell,k}\log_2\!\left(\! \frac{\sum_{k=1}^K|{\bf \hat h}^{{\sf H}}_{\ell,\ell,k}{\bf f}_{\ell,i}|^2 \!+\!\sum_{i=1}^K{\bf f}_{\ell,i}^{\sf H}{\bf \Phi}_{\ell,\ell,k}{\bf f}_{\ell,i}\!+\! \frac{{\tilde \sigma}^2_{\ell,k}}{P}}{\sum_{i\neq k}^K|{\bf \hat h}^{{\sf H}}_{\ell,\ell,k}{\bf f}_{\ell,i}|^2 \!+\!\sum_{k=1}^K{\bf f}_{\ell,i}^{\sf H}{\bf \Phi}_{\ell,\ell,k}{\bf f}_{\ell,i} \!+\! \frac{{\tilde \sigma}^2_{\ell,k}}{P} }\!\right)\nonumber \\
&\!=\!\log_2\!\left(\!\prod_{k=1}^K\!\left[\!\frac{\sum_{i=1}^K\! {\bf f}_{\ell,i}^{\sf H}\left({{\bf \hat h}_{\ell,\ell,k}}{{\bf \hat h}^{{\sf H}}_{\ell,\ell,k}} \!\!+\! {\bf \Phi}_{\ell,\ell,k} \right){\bf f}_{\ell,i} \!+\! \frac{{\tilde \sigma}^2_{\ell,k}}{P}}{\sum_{i\neq k}^K  {\bf f}_{\ell,i}^{\sf H} {{\bf \hat h}_{\ell,\ell,k}}{{\bf \hat h}^{{\sf H}}_{\ell,\ell,k}} {\bf f}_{\ell,i}\!+\!\sum_{i=1}^K \! {\bf f}_{\ell,i}^{\sf H} {\bf \Phi}_{\ell,\ell,k} {\bf f}_{\ell,i}\!+\!\! \frac{{\tilde \sigma}^2_{\ell,k}}{P} }\!\right]^{\!w_{\ell,k}}\!\right).
\end{align}
Therefore, the maximization problem for the weighted sum of the instantaneous spectral efficiency with imperfect CSIT is equivalently written as
\begin{align}
\!& \underset{{\bf f}_{\ell,1},\ldots,{\bf f}_{\ell,K}}{\text{arg~max}}
\!&  & \!\!  \!\! \!\! \!\!\!\!\prod_{k=1}^K\!\left[\!\frac{\sum_{i=1}^K\! {\bf f}_{\ell,i}^{\sf H}\left({{\bf \hat h}_{\ell,\ell,k}}{{\bf \hat h}^{{\sf H}}_{\ell,\ell,k}} \!\!+\! {\bf \Phi}_{\ell,\ell,k} \right){\bf f}_{\ell,i} \!+\! \frac{{\tilde \sigma}^2_{\ell,k}}{P}}{\sum_{i\neq k}^K  {\bf f}_{\ell,i}^{\sf H} {{\bf \hat h}_{\ell,\ell,k}}{{\bf \hat h}^{{\sf H}}_{\ell,\ell,k}} {\bf f}_{\ell,i}\!+\!\sum_{i=1}^K \! {\bf f}_{\ell,i}^{\sf H} {\bf \Phi}_{\ell,\ell,k} {\bf f}_{\ell,i}\!+\!\! \frac{{\tilde \sigma}^2_{\ell,k}}{P} }\!\!\right]^{\!w_{\ell,k}}  \nonumber\\
& \text{subject to}
& & \sum_{i=1}^K\|{\bf f}_{\ell,i}\|_2^2\leq 1. \label{eq:Optimization1}
\end{align} 
Notice that the optimization problem in \eqref{eq:Optimization1} is still a non-convex problem. Unlike the original problem in \eqref{eq:Optimization_general}, the optimization problem in \eqref{eq:Optimization1} maximizes the sum-spectral efficiency under the premise that all $K$ users are scheduled, even in the case of $K>N$. By considering all optimization variables, we aims at finding a joint solution of a set of scheduled users,  precoding vectors, and the power allocations in a computationally efficient manner. In the sequel, we present a low complexity algorithm that solves the optimization problem in \eqref{eq:Optimization1} with the guarantee of the first-order optimality.

\section{ Joint User-Selection, Precoding, and Power Allocation Algorithm}

In this section, we present an algorithm that maximizes a lower bound of the weighted-sum of the instantaneous spectral efficiency with imperfect CSIT in multi-cell MU-MIMO systems. The key aspect of the proposed algorithm is to jointly find 1) a set of downlink users, 2) power allocated information symbols, and 3) precoding vectors carrying information symbols.

\subsection{The Proposed Optimization Approach}
The key idea of the proposed method is to reformulate the optimization problem in \eqref{eq:Optimization1} into a product form of Rayleigh quotients. To accomplish this, we first define a large precoding vector used for the $\ell$th BS as ${\bf f}_{\ell}=\left[{\bf f}_{\ell,1}^{\top},{\bf f}_{\ell,2}^{\top},\ldots,{\bf f}_{\ell,K}^{\top}\right]^{\top}\in\mathbb{C}^{NK\times 1}$. Using this precoding vector, we reformulate the numerator of the objective function in \eqref{eq:Optimization1} with the effective channel matrix ${\bf A}_{\ell,\ell,k}\in\mathbb{C}^{NK\times NK}$ as
\begin{align}
	\sum_{i=1}^K\! {\bf f}_{\ell,i}^{\sf H}\left({{\bf \hat h}_{\ell,\ell,k}}{{\bf \hat h}^{{\sf H}}_{\ell,\ell,k}} \!\!+\! {\bf \Phi}_{\ell,\ell,k} \right){\bf f}_{\ell,i} \!+\! \frac{{\tilde \sigma}^2_{\ell,k}}{P} = {\bf f}_{\ell}^{\sf H}{\bf A}_{\ell,\ell,k}{\bf f}_{\ell},
\end{align}
where ${\bf A}_{\ell,\ell,k}\in\mathbb{C}^{NK\times NK}$ is a block diagonal and positive definite matrix defined as
 \begin{align}
& \!\!{\bf A}_{\ell,\ell,k}=\!\!\nonumber \\
&\!\!\!\small\begin{bmatrix}
      {{\bf \hat h}_{\ell,\ell,k}}{{\bf \hat h}^{{\sf H}}_{\ell,\ell,k}} \!\!\!+\!\! {\bf \Phi}_{\ell,\ell,k}  \!\!&\!\! {\bf 0}  \!\!&\!\! {\bf 0}    \!\!&\!\! \dots  \ \!\!&\!\! {\bf 0} \\     
    \vdots  \!\!&\!\!  \ddots  \!\!&\!\!  \vdots \!&\!    \ddots  \!\!&\!\!  \vdots \\
    {\bf 0}  \!\!&\!\!  \vdots  \!\!&\!\!  {{\bf \hat h}_{\ell,\ell,k}}{{\bf \hat h}^{{\sf H}}_{\ell,\ell,k}}\!\!\!+\!\! {\bf \Phi}_{\ell,\ell,k}    \!\!&\!\!  \dots   \!\!&\!\! {\bf 0} \\
    \vdots  \!\!&\!\!  \vdots  \!\!&\!\! \vdots \!\!&\!\!    \ddots  \!\!&\!\! \vdots \\
    {\bf 0} \!\!&\!\! {\bf 0}  \!\!&\!\!  {\bf 0}  \!\!&\!\!   \dots  \!\!&\!\!  {{\bf \hat h}_{\ell,\ell,k}}{{\bf \hat h}^{{\sf H}}_{\ell,\ell,k}} \!\!\!+\!\! {\bf \Phi}_{\ell,\ell,k}  \!\!
\end{bmatrix}  +\frac{{\tilde \sigma}^2_{\ell,k}}{P}{\bf I}_{NK}.\label{eq:Amat}
\end{align}
Similarly, the denominator of the objective function in \eqref{eq:Optimization1} is rewritten with the effective channel matrix ${\bf B}_{\ell,\ell,k}\in\mathbb{C}^{NK\times NK}$ as
\begin{align}
	\sum_{i\neq k}^K  {\bf f}_{\ell,i}^{\sf H} {{\bf \hat h}_{\ell,\ell,k}}{{\bf \hat h}^{{\sf H}}_{\ell,\ell,k}} {\bf f}_{\ell,i}\!+\!\sum_{i=1}^K \! {\bf f}_{\ell,i}^{\sf H} {\bf \Phi}_{\ell,\ell,i} {\bf f}_{\ell,i} \!+\! \frac{{\tilde \sigma}^2_{\ell,k}}{P} = {\bf f}_{\ell}^{\sf H}{\bf B}_{\ell,\ell,k}{\bf f}_{\ell},
\end{align}
where ${\bf B}_{\ell,\ell,k}$ is the positive definite matrix defined as
\begin{align}
		{\bf B}_{\ell,\ell,k} &= {\bf A}_{\ell,\ell,k}- \begin{bmatrix}
   0 &0 &0 & \dots  & 0 \\
     \vdots & \vdots & \vdots & \ddots & \vdots \\
    0 & \cdots &  {{\bf \hat h}_{\ell,\ell,k}}{{\bf \hat h}^{{\sf H}}_{\ell,\ell,k}} & \dots  & 0 \\
    \vdots & \vdots & \vdots & \ddots & \vdots \\
    0 & 0 & 0 & \dots  & 0\end{bmatrix}. \label{eq:Bmat}
\end{align}
As a result, the optimization problem in \eqref{eq:Optimization1} is equivalently rewritten as a product form of Rayleigh quotients, i.e.,
\begin{align}
  &\underset{{\bf f}_{\ell} \in \mathbb{C}^{NK\times 1}}{\text{arg~max}} 
~~~\prod_{k=1}^K\left[\frac{ {\bf f}_{\ell}^{\sf H} {\bf A}_{\ell,\ell,k}  {\bf f}_{\ell} }{ {\bf f}_{\ell}^{\sf H} {\bf B}_{\ell,\ell,k}  {\bf f}_{\ell} }\right]^{w_{\ell,k}}  \nonumber \\
& \text{subject to} ~~~  \|{\bf f}_{\ell}\|_2^2\leq 1. \label{eq:Reformulated}
\end{align} 
Unfortunately, the objective function in \eqref{eq:Reformulated} is neither convex nor concave function; thereby, finding the global optimal solution of the problem in \eqref{eq:Reformulated} is a very challenging task. Instead, we are able to obtain a suboptimal solution by finding a solution that satisfies the first order necessary Karush-Kuhn-Tucker (KKT) condition. Let the objective function in \eqref{eq:Reformulated} be
\begin{align}
	\lambda({\bf f}_{\ell})&=\prod_{k=1}^K\left[\frac{ {\bf f}_{\ell}^{\sf H} {\bf A}_{\ell,\ell,k}  {\bf f}_{\ell} }{ {\bf f}_{\ell}^{\sf H} {\bf B}_{\ell,\ell,k}  {\bf f}_{\ell} }\right]^{w_{\ell,k}}.
\end{align}
Because for any nonzero $\alpha$, $\lambda({\bf f}_{\ell})=\lambda(\alpha{\bf f}_{\ell})$, we first derive the first order KKT condition of the optimization problem in \eqref{eq:Reformulated} by ignoring the norm constraint on the precoding vector, i.e., $\|{\bf f}_{\ell}\|_2^2\leq1$. Then, the sub-optimal solution vector that satisfies the KKT condition is projected to the unit sphere to be a feasible point. The following lemma shows the first order KKT condition of the unconstrained optimization problem in \eqref{eq:Reformulated}.
\begin{lem}
	The first order KKT condition, i.e., $\frac{\partial \lambda({\bf f}_{\ell})}{\partial {\bf f}_{\ell}^{\sf H}}=0$ satisfies 
  		\begin{align}
  			{\bf \bar A}_{\ell,\ell}({\bf f}_{\ell}) {\bf f}_{\ell} =\lambda({\bf f}_{\ell}) {\bf \bar B}_{\ell,\ell}({\bf f}_{\ell}) {\bf f}_{\ell}, \label{eq:KKT}
  		\end{align}
where 
\begin{align}
	{\bf \bar A}_{\ell,\ell}({\bf f}_{\ell})&=\sum_{i=1}^K w_{\ell,i}\left( {\bf f}_{\ell}^{{\sf H}}{\bf A}_{\ell,\ell,i}{\bf f}_{\ell}\right)^{\!w_{\ell,i}\!-\!1}\left(\prod_{k\neq i}^K {\bf f}_{\ell}^{{\sf H}}{\bf A}_{\ell,\ell,k}{\bf f}_{\ell}\right) {\bf A}_{\ell,\ell,i}, \nonumber\\
	{\bf \bar B}_{\ell,\ell}({\bf f}_{\ell})&= \sum_{i=1}^Kw_{\ell,i}\left(  {\bf f}_{\ell}^{{\sf H}}{\bf B}_{\ell,\ell,i}{\bf f}_{\ell}\right)^{\!w_{\ell,i}\!-\!1}\left(\prod_{k\neq i}^K {\bf f}_{\ell}^{{\sf H}}{\bf B}_{\ell,\ell,k}{\bf f}_{\ell}\right) {\bf B}_{\ell,\ell,i}.
	\end{align}
\end{lem}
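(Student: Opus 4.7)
The plan is to apply Wirtinger derivative calculus to the product-of-Rayleigh-quotients objective and then algebraically reorganize the resulting stationarity identity to expose the functional generalized eigenvalue structure of \eqref{eq:KKT}. Since each $\mathbf{A}_{\ell,\ell,k}$ and $\mathbf{B}_{\ell,\ell,k}$ is Hermitian positive definite by construction in \eqref{eq:Amat}--\eqref{eq:Bmat}, we have $\lambda(\mathbf{f}_\ell)>0$ on the feasible set, so stationarity of $\lambda$ is equivalent to stationarity of $\log\lambda$. Working with the logarithm dramatically simplifies the computation because $\log\lambda(\mathbf{f}_\ell)=\sum_{k=1}^{K}w_{\ell,k}\big[\log(\mathbf{f}_\ell^{\sf H}\mathbf{A}_{\ell,\ell,k}\mathbf{f}_\ell)-\log(\mathbf{f}_\ell^{\sf H}\mathbf{B}_{\ell,\ell,k}\mathbf{f}_\ell)\big]$ is a sum of logs rather than a product of weighted ratios.

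Using the standard Wirtinger identity $\partial(\mathbf{f}^{\sf H}\mathbf{M}\mathbf{f})/\partial\mathbf{f}^{\sf H}=\mathbf{M}\mathbf{f}$ for Hermitian $\mathbf{M}$, together with the chain rule applied to $\log(\cdot)$, I will differentiate termwise to obtain
\begin{align}
\frac{\partial\log\lambda(\mathbf{f}_\ell)}{\partial\mathbf{f}_\ell^{\sf H}}=\sum_{k=1}^{K}w_{\ell,k}\left[\frac{\mathbf{A}_{\ell,\ell,k}\mathbf{f}_\ell}{\mathbf{f}_\ell^{\sf H}\mathbf{A}_{\ell,\ell,k}\mathbf{f}_\ell}-\frac{\mathbf{B}_{\ell,\ell,k}\mathbf{f}_\ell}{\mathbf{f}_\ell^{\sf H}\mathbf{B}_{\ell,\ell,k}\mathbf{f}_\ell}\right].\nonumber
\end{align}
Setting this to zero yields a balanced identity between an $\mathbf{A}$-side sum and a $\mathbf{B}$-side sum, both of the form $\sum_i w_{\ell,i}\mathbf{M}_{\ell,\ell,i}\mathbf{f}_\ell/(\mathbf{f}_\ell^{\sf H}\mathbf{M}_{\ell,\ell,i}\mathbf{f}_\ell)$.

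The final step is to recognize this balanced identity as the generalized eigenvalue equation \eqref{eq:KKT}. To do so, I will clear denominators by multiplying the $\mathbf{A}$-side by $\prod_{j}(\mathbf{f}_\ell^{\sf H}\mathbf{A}_{\ell,\ell,j}\mathbf{f}_\ell)^{w_{\ell,j}}$ and the $\mathbf{B}$-side by $\prod_{j}(\mathbf{f}_\ell^{\sf H}\mathbf{B}_{\ell,\ell,j}\mathbf{f}_\ell)^{w_{\ell,j}}$. Each summand indexed by $i$ then reorganizes into precisely the $i$-th term in the definition of $\bar{\mathbf{A}}_{\ell,\ell}(\mathbf{f}_\ell)\mathbf{f}_\ell$ (resp.\ $\bar{\mathbf{B}}_{\ell,\ell}(\mathbf{f}_\ell)\mathbf{f}_\ell$): the extracted factor $w_{\ell,i}(\mathbf{f}_\ell^{\sf H}\mathbf{A}_{\ell,\ell,i}\mathbf{f}_\ell)^{w_{\ell,i}-1}\prod_{k\neq i}(\mathbf{f}_\ell^{\sf H}\mathbf{A}_{\ell,\ell,k}\mathbf{f}_\ell)^{w_{\ell,k}}$ multiplies $\mathbf{A}_{\ell,\ell,i}\mathbf{f}_\ell$, and analogously for the $\mathbf{B}$-side. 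Since the ratio of the two rescaling factors is exactly $\lambda(\mathbf{f}_\ell)$, preserving the equality across this rescaling produces the claimed $\bar{\mathbf{A}}_{\ell,\ell}(\mathbf{f}_\ell)\mathbf{f}_\ell=\lambda(\mathbf{f}_\ell)\bar{\mathbf{B}}_{\ell,\ell}(\mathbf{f}_\ell)\mathbf{f}_\ell$. There is no substantive obstacle; the proof is essentially a Wirtinger-derivative computation, and the only step requiring a bit of care is the bookkeeping of weighted products when passing from the $1/(\mathbf{f}_\ell^{\sf H}\mathbf{A}_{\ell,\ell,i}\mathbf{f}_\ell)$ form to the $(\mathbf{f}_\ell^{\sf H}\mathbf{A}_{\ell,\ell,i}\mathbf{f}_\ell)^{w_{\ell,i}-1}$ form, for which one relies on the positivity of every quadratic form $\mathbf{f}_\ell^{\sf H}\mathbf{A}_{\ell,\ell,k}\mathbf{f}_\ell$ and $\mathbf{f}_\ell^{\sf H}\mathbf{B}_{\ell,\ell,k}\mathbf{f}_\ell$ guaranteed by positive definiteness.
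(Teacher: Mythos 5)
Your proof is correct and reaches the paper's equation by essentially the same mechanism --- a direct Wirtinger stationarity computation --- but you organize it differently: you differentiate $\log\lambda(\mathbf{f}_\ell)$, which turns the weighted product into a sum of logarithms and yields the balanced identity $\sum_{i}w_{\ell,i}\,\mathbf{A}_{\ell,\ell,i}\mathbf{f}_\ell/(\mathbf{f}_\ell^{\sf H}\mathbf{A}_{\ell,\ell,i}\mathbf{f}_\ell)=\sum_{i}w_{\ell,i}\,\mathbf{B}_{\ell,\ell,i}\mathbf{f}_\ell/(\mathbf{f}_\ell^{\sf H}\mathbf{B}_{\ell,\ell,i}\mathbf{f}_\ell)$ before rescaling, whereas the paper applies the quotient rule to $\lambda=\prod_k f_k/\prod_k g_k$ and then the product rule to numerator and denominator separately. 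The two are algebraically equivalent (the equivalence of stationarity of $\lambda$ and of $\log\lambda$ rests on $\lambda>0$, which you correctly justify via positive definiteness), and your intermediate normalized-gradient form is arguably the cleaner statement of the KKT condition. One point worth noting: when you clear denominators you obtain the factor $\prod_{k\neq i}(\mathbf{f}_\ell^{\sf H}\mathbf{A}_{\ell,\ell,k}\mathbf{f}_\ell)^{w_{\ell,k}}$ \emph{with} the exponents $w_{\ell,k}$, which is the correct gradient of $\prod_k(\mathbf{f}_\ell^{\sf H}\mathbf{A}_{\ell,\ell,k}\mathbf{f}_\ell)^{w_{\ell,k}}$; the lemma statement and the paper's proof write this product without the exponents, which agrees with your expression only when all weights equal one. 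Your version is the one that should be taken as the definition of $\bar{\mathbf{A}}_{\ell,\ell}$ and $\bar{\mathbf{B}}_{\ell,\ell}$ for general weights; the generalized-eigenvalue form of the conclusion is unaffected.
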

 
\begin{proof}
	The proof is direct from the derivative property. Let $f_k({\bf x})=\left({\bf x}^{\sf H}{\bf A}_k{\bf x}\right)^{w_k}$ and $g_k({\bf x})=\left({\bf x}^{\sf H}{\bf B}_k{\bf x}\right)^{w_k}$ be functions with respective to ${\bf x}\in\mathbb{C}^N$, where ${\bf A}_k$ and ${\bf B}_k$ are positive definite matrices for all $k\in\mathcal{K}$. Let $\lambda({\bf x})= \frac{\prod_{k=1}^Kf_k({\bf x})}{\prod_{k=1}^Kg_k({\bf x})} =\frac{f({\bf x}) }{g({\bf x})}$. By the definition, the first-order KKT condition satisfies the following condition:
	\begin{align}
		\frac{ \lambda({\bf x})}{\nabla {\bf x}^{\sf H}}=\frac{ \frac{\prod_{k=1}^K f_k({\bf x}) }{ \nabla {\bf x}^{\sf H} }\prod_{k=1}^Kg_k({\bf x}) - \frac{\prod_{k=1}^K g_k({\bf x}) }{\nabla {\bf x}^{\sf H} }\prod_{k=1}^K f_k({\bf x})  }{ \left(\prod_{k=1}^Kg_k({\bf x})\right)^2}=0, 	\label{eq:KKT_condition}
		\end{align}
where \small ${\frac{\prod_{k=1}^K f_k({\bf x})}{\nabla {\bf x}^{\sf H}}=\left[\sum_{i=1}^Kw_k \left({\bf x}^{\sf H} {\bf A}_i{\bf x}\right)^{w_k-1}\prod_{k \neq i}^K \left({\bf x}^{\sf H} {\bf A}_k{\bf x}\right) {\bf A}_i \right]{\bf x}}$ and $\frac{ \prod_{k=1}^Kg_k({\bf x})}{\nabla {\bf x}^{\sf H}}=\left[\sum_{i=1}^Kw_k \left({\bf x}^{\sf H} {\bf B}_i{\bf x}\right)^{w_k-1} \prod_{k \neq i}^K \left({\bf x}^{\sf H} {\bf B}_k{\bf x}\right) {\bf B}_i
\right]{\bf x}$.\normalsize 

The condition in \eqref{eq:KKT_condition} simplifies to
\begin{align}
	\frac{ \prod_{k=1}^Kf_k({\bf x}) }{ \nabla {\bf x}^{\sf H} }&=\frac{\prod_{k=1}^Kf_k({\bf x})}{\prod_{k=1}^Kg_k({\bf x})} \frac{ \prod_{k=1}^Kg_k({\bf x}) }{\nabla {\bf x}^{\sf H} }  =\lambda({\bf x})\frac{ \prod_{k=1}^Kg_k({\bf x}) }{\nabla {\bf x}^{\sf H} }.
	\end{align}
	This completes the proof. 
\end{proof}
%\frac{\frac{\frac{\frac{\prod_{k=1}^Kf_k({\bf x})}{\nabla {\bf x}^{\sf H}}\prod_{k=1}^Kg_k({\bf x})}}{\left(\prod_{k=1}^Kg_k({\bf x})\right)^2}

We provide an intuitive interpretation of the first-order KKT condition derived in Lemma 1 with the lens through a generalized eigenvalue problem. Finding a solution that satisfies the first-order KKT condition in \eqref{eq:KKT_condition}, in general, is challenging, because it is difficult to efficiently solve the set of $NK$ nonlinear equations with $NK$ unknown variables, especially when the number of antennas at the BS and the number of users are large enough, i.e., $NK\geq 1000$. To overcome this difficulty, the proposed approach is to treating the value of the objective function $\lambda({\bf f}_{\ell})=\prod_{k=1}^K\left[\frac{ {\bf f}_{\ell}^{\sf H} {\bf A}_{\ell,\ell,k}  {\bf f}_{\ell} }{ {\bf f}_{\ell}^{\sf H} {\bf B}_{\ell,\ell,k}  {\bf f}_{\ell} }\right]^{w_{\ell,k}} $ as an eigenvalue of the matrix $\left[{\bf \bar B}_{\ell,\ell}({{\bf f}}_{\ell})\right]^{-1} {\bf \bar A}_{\ell,\ell}({{\bf f}}_{\ell}) $ and the unknown vector ${\bf f}_{\ell} $ as the eigenvector corresponding to the eigenvalue. Then, the optimal solution of our optimization problem is obtained, provided that the first eigenvector of $\left[{\bf \bar B}_{\ell,\ell}({{\bf f}}_{\ell})\right]^{-1} {\bf \bar A}_{\ell,\ell}({{\bf f}}_{\ell}) $ is found. However, our optimization problem differs from a conventional generalized eigenvalue problem by the fact that the system matrix $\left[{\bf \bar B}_{\ell,\ell}({{\bf f}}_{\ell})\right]^{-1} {\bf \bar A}_{\ell,\ell}({{\bf f}}_{\ell}) $ is also a function of unknown ${\bf f}_{\ell}$. This makes the exact solution of the original problem difficult. Nevertheless, in the next subsection, we present a computationally efficient algorithm that converges to a feasible point that satisfies the first-order KKT optimality condition, although it may not be the global optimal solution.

\subsection{Generalized Power Iteration Precoding}
To overcome this computational difficulty, in this subsection, we propose a novel generalized power iteration precoding (GPIP) algorithm. The proposed GPIP algorithm is a simple yet computationally-efficient algorithm that indentifies the principal eigenvector of $\left[{\bf \bar B}_{\ell,\ell}({{\bf f}}_{\ell})\right]^{-1} {\bf \bar A}_{\ell,\ell}({{\bf f}}_{\ell}) $. 

Before presenting the proposed algorithm, we first need to understand the structure of the two matrices ${\bf \bar A}_{\ell,\ell}({\bf f}_{\ell})$ and  ${\bf \bar B}_{\ell,\ell}({\bf f}_{\ell})$. The matrix ${\bf \bar A}_{\ell,\ell}({\bf f}_{\ell})\in \mathbb{C}^{NK\times NK}$ is a linear combination of $K$ matrices, i.e., 
\begin{align}
	{\bf \bar A}_{\ell,\ell}({\bf f}_{\ell})=\sum_{i=1}^Kc_{\ell,\ell,i}({\bf f}_{\ell}){\bf A}_{\ell,\ell,i}, \label{eq:Astructure}
\end{align}
where $c_{\ell,\ell,i}({\bf f}_{\ell})$ is the positive weight of ${\bf A}_{\ell,\ell,i}$ defined as $c_{\ell,\ell,i}({\bf f}_{\ell})=w_{\ell,i}\left( {\bf f}_{\ell}^{{\sf H}}{\bf A}_{\ell,\ell,i}{\bf f}_{\ell}\right)^{\!w_{\ell,i}\!-\!1}\left(\prod_{k\neq i}^K {\bf f}_{\ell}^{{\sf H}}{\bf A}_{\ell,\ell,k}{\bf f}_{\ell}\right)$. Since the coefficient is a function of variable vector ${\bf f}_{\ell}$, the matrix ${\bf \bar A}_{\ell,\ell}({\bf f}_{\ell})$ is also function of ${\bf f}_{\ell}$. Similarly, the matrix ${\bf \bar B}_{\ell,\ell}({\bf f}_{\ell})\in \mathbb{C}^{NK\times NK}$ is a linear combination of $K$ matrices, i.e., 
\begin{align}
	{\bf \bar B}_{\ell,\ell}({\bf f}_{\ell})=\sum_{i=1}^Kd_{\ell,\ell,i}({\bf f}_{\ell}){\bf B}_{\ell,\ell,i}, \label{eq:Bstructure}
\end{align}
where the positive weight $d_{\ell,\ell,i}({\bf f}_{\ell})=\sum_{i=1}^Kw_{\ell,i}\left(  {\bf f}_{\ell}^{{\sf H}}{\bf B}_{\ell,\ell,i}{\bf f}_{\ell}\right)^{\!w_{\ell,i}\!-\!1}\left(\prod_{k\neq i}^K {\bf f}_{\ell}^{{\sf H}}{\bf B}_{\ell,\ell,k}{\bf f}_{\ell}\right)$. In addition, ${\bf \bar A}_{\ell,\ell}({\bf f}_{\ell})\in \mathbb{C}^{NK\times NK}$ and ${\bf \bar B}_{\ell,\ell}({\bf f}_{\ell})\in \mathbb{C}^{NK\times NK}$ are block diagonal and positive-definite matrices, because ${\bf A}_{\ell,\ell,i}$ and ${\bf B}_{\ell,\ell,i}$ are also block diagonal and positive-definite matrices. These matrix properties will be exploited when designing a computationally-efficient algorithm in the sequel.

%The following theorem is the major result of this paper.

 The proposed GPIP algorithm starts with the initial solution of ${\bf f}_{\ell}^0$, which can be a simple maximum ratio transmission (MRT) precoding solution. In the $m$th iteration, for a given precoding vector ${\bf f}_{\ell}^{(m-1)}$, the algorithm computes the system matrix $\left[{\bf \bar B}_{\ell,\ell}\left({\bf f}_{\ell}^{(m-1)}\right)\right]^{-1}{\bf \bar A}_{\ell,\ell}\left({\bf f}_{\ell}^{(m-1)}\right) \in \mathbb{C}^{NK\times NK}$. Once this system matrix is given, the algorithm updates ${\bf f}_{\ell}^{(m)}$ by multiplying the previously updated precoding vector ${\bf f}_{\ell}^{(m-1)}$ into the system matrix $\left[{\bf \bar B}_{\ell,\ell}\left({\bf f}_{\ell}^{(m-1)}\right)\right]^{-1}{\bf \bar A}_{\ell,\ell}\left({\bf f}_{\ell}^{(m-1)}\right)$, i.e., ${\bf f}_{\ell}^{(m)}=\left[{\bf \bar B}_{\ell,\ell}\left({\bf f}_{\ell}^{(m-1)}\right)\right]^{-1}{\bf \bar A}_{\ell,\ell}\left({\bf f}_{\ell}^{(m-1)}\right){\bf f}_{\ell}^{(m-1)}$. To satisfy the unit power constraint, the updated precoding solution is normalized to its norm, i.e., ${\bf f}_{\ell}^{(m)}=\frac{{\bf f}_{\ell}^{(m)}}{\|{\bf f}_{\ell}^{(m)}\|_2}$. If the distance between the updated and the previous precoding solutions is lager than a tolerance level, the algorithm goes to Step 2; otherwise, the algorithm ends. The proposed GPIP is summarized in Table \ref{tab:GPI_AlG}. 
\begin {table}[]
\caption {Generalized Power Iteration Precoding (GPIP) Algorithm} \vspace{-0.5cm}\label{tab:GPI_AlG} 
  	 \begin{center}
  \begin{tabular}{ l | c }
    \hline\hline
    Step 1 & Initialize ${\bf f}_{\ell}^0$ (MRT)  \\ \hline
        Step 2 & In the $m$-th iteration,  \\ \hline
     & Compute $\left[{\bf \bar B}_{\ell,\ell}\left({\bf f}_{\ell}^{(m-1)}\right)\right]^{-1}{\bf \bar A}_{\ell,\ell}\left({\bf f}_{\ell}^{(m-1)}\right)$ \\ \hline     
     & ${\bf f}_{\ell}^{(m)}:=\left[{\bf \bar B}_{\ell,\ell}\left({\bf f}_{\ell}^{(m-1)}\right)\right]^{-1}{\bf \bar A}_{\ell,\ell}\left({\bf f}_{\ell}^{(m-1)}\right){\bf f}_{\ell}^{(m-1)}$ \\ \hline
          & ${\bf f}_{\ell}^{(m)}:=\frac{{\bf f}_{\ell}^{(m)}}{\|{\bf f}_{\ell}^{(m)}\|_2}$ \\ \hline
    Step 3 & Iterates until $\|{\bf f}_{\ell}^{(m-1)}-{\bf f}_{\ell}^{(m)}\|_2\leq \epsilon$\\
    \hline
   \hline
  \end{tabular}
\end{center}
\end {table}

	 {\bf Remark 2 (Generalized power iteration precoding):} GPIP was originally developed in our prior work \cite{Lee_2008} in the context of the two-way relay channel when the relay has multiple antennas. Lemma 1 generalizes the first-order optimality condition in \cite{Lee_2008} by incorporating different weights for each Rayleigh quotient. One interesting observation is that the proposed GPIP is applicable for a different class of problems, particularly for the joint design of user-selection, precoding, and power allocation in the MU-MIMO systems. 
 
\subsection{Algorithm Complexity Analysis}
We provide an analysis for the computational complexity of the proposed GPIP algorithm, which is relevant to the effectiveness of the algorithm for practical use, especially when $N$ and $K$ are sufficiently large, i.e., a massive MIMO setting, $NK\geq 1000$. 

Note that  ${\bf \bar A}_{\ell,\ell,k}  \in \mathbb{C}^{KN \times KN}$ and ${\bf \bar B}_{\ell,\ell,k}  \in \mathbb{C}^{KN \times KN}$ are block diagonal matrices whose sub-block size is $N$-by-$N$.  As can be seen in \eqref{eq:Bmat}, it is observed that the $i$th sub-block matrix of ${\bf \bar B}_{\ell,\ell,k}$ is the form of $ {{\bf \hat h}_{\ell,\ell,k}}{{\bf \hat h}^{{\sf H}}_{\ell,\ell,k}}+ {\bf \Phi}_{\ell,\ell,k}+\frac{{\tilde \sigma}^2_{\ell,k}}{P}{\bf I}_{N} $ for $i\in\mathcal{K}/\{k\}$, while the $k$th sub-block matrix  is the form of ${\bf \Phi}_{\ell,\ell,k}+\frac{{\tilde \sigma}^2_{\ell,k}}{P}{\bf I}_{N}$. Hence, we independently perform the inverse of $K$ sub-matrices, i.e.,  $ {{\bf \hat h}_{\ell,\ell,k}}{{\bf \hat h}^{{\sf H}}_{\ell,\ell,k}} \!\!+\! {\bf \Phi}_{\ell,\ell,k}+\frac{{\tilde \sigma}^2_{\ell,k}}{P}{\bf I}_{N} $ for $k\in\mathcal{K}/\{k\}$ in order to compute $\left[{\bf \bar B}_{\ell,\ell,k}\left({\bf f}_{\ell}^{(m-1)}\right)\right]^{-1}$.  Recall that 
\begin{align}
	{\bf \bar B}_{\ell,\ell}\left({\bf f}_{\ell}^{(m-1)}\right)=\sum_{i=1}^Kd_{\ell,\ell,i}\left({\bf f}_{\ell}^{(m-1)}\right){\bf B}_{\ell,\ell,i},
	\end{align}
Since each sub-block matrix of ${\bf B}_{\ell,\ell,i}$ is symmetric, we need the computational complexity order of $\mathcal{O}( \frac{1}{3}N^3)$ using a Cholesky factorization when obtaining the inverse of each sub-matrix.  As a result, we can compute $\left[{\bf \bar B}_{\ell,\ell,k}\left({\bf f}_{\ell}^{(m-1)}\right)\right]^{-1}$ in a divide-and-conquer manner with the computational complexity order of $\mathcal{O}\left( K\frac{1}{3}N^3\right)$.

Similarly, thanks to the block diagonal structure, we can compute $\left[{\bf \bar B}_{\ell,\ell,k}\left({\bf f}_{\ell}^{(m-1)}\right)\right]^{-1}{\bf \bar A}_{\ell,\ell,k}\left({\bf f}_{\ell}^{(m-1)}\right)$ in a divide-and-conquer manner with the computational complexity order of $\mathcal{O}(KN^2)$. Consequently, we need a total computation complexity order of $\mathcal{O}\left(JK\frac{1}{3}N^3\right)$, where $J$ is the number of iterations. The number of required iteration for the convergence with $\epsilon = 0.1$ is typically less than 4, which will be validated via simulations. 

\begin {table} 
\caption {Computational Complexity of the GPIP Algorithm} \label{tab:GPI_Complexity}  \vspace{-0.5cm}
\begin{center}
  \begin{tabular}{ l | c }
    \hline
      Initial solution & $\mathcal{O}\left(KN\right)$ flops \\ & (for MRT)    \\ \hline
       Computation of ${\bf \bar A}_{\ell,\ell,k}\left({\bf f}_{\ell}^{(m-1)}\right)$ & $\mathcal{O}\left(KN^2\right)$ flops\\ \hline   
      Computation of ${\bf \bar B}_{\ell,\ell,k}\left({\bf f}_{\ell}^{(m-1)}\right)$ & $\mathcal{O}\left(KN^2\right)$ flops\\ \hline
     Computation of $\left[{\bf \bar B}_{\ell,\ell,k}\left({\bf f}_{\ell}^{(m-1)}\right)\right]^{-1}{\bf \bar A}_{\ell,\ell,k}\left({\bf f}_{\ell}^{(m-1)}\right)$ & $\mathcal{O}\left(K\frac{2}{3}N^3\right)$ flops\\ \hline
    \hline
  \end{tabular}
\end{center} \vspace{-1cm}
 \end{table}

 \subsection{No Channel Covariance Matrix Information}
 In this subsection, we consider a practically important case in which the spatial correlation matrices are not available at the BSs. This restriction is common in practice, because estimating and tracking the spatial channel correlation matrix ${\bf R}_{\ell,\ell,k}$ of all users per cell is very challenging in MU-MIMO systems equipped with a large array of antennas \cite{Emil_Debbah_2016,Haghighatshoar_Caire_2017}. In this case, one simple approach is to treat the spatial channel covariance matrix as an identity matrix when deigning precoding vectors. With no spatial channel correlation matrix information, we use an estimate of the covariance matrix for the channel estimation error as $
 	{\bf \hat \Phi}_{\ell,\ell,k}=\alpha_{\ell,\ell, k}{\bf I}_N$ where $\alpha_{\ell,\ell,k}= {\beta}_{\ell,\ell,k}\left(1-\frac{{\beta}_{\ell,\ell,k}}{\sum_{j=1}^L{\beta}_{\ell,j,k} +\frac{\sigma^2}{\tau^{\rm ul}p^{\rm ul}} }\right)$, which causes the mismatch effect of the channel error covariance matrix. With imperfect channel covariance matrix ${\bf \hat \Phi}_{\ell,\ell,k}$, the loss in the spectral efficiency can occur. Due to the diagonal matrix structure of ${\bf \hat \Phi}_{\ell,\ell,k}$, however, we can reduce the computational complexity of the proposed power iteration precoding significantly using the Sherman-Morrison formula, which replaces matrix inversion operation to vector multiplication operation.

 	Recall that, from the definitions in \eqref{eq:Bmat} and \eqref{eq:Bstructure}, ${\bf \bar B}_{\ell,\ell}\left({\bf f}_{\ell}\right)\in \mathbb{C}^{NK \times NK}$ is a block diagonal matrix, i.e., 	\begin{align}
		{\bf \bar B}_{\ell,\ell}\left({\bf f}_{\ell}\right) &\!=\!\small{ \begin{bmatrix}
   {\bf \tilde B}^{(K)}_{\ell,\ell,1}\left({\bf f}_{\ell}\right) &0 &0 & \dots  & 0 \\
      \vdots &\ddots & \vdots & \ddots & \vdots \\
     0 & \cdots &{\bf \tilde B}^{(K)}_{\ell,\ell,k}\left({\bf f}_{\ell}\right)& \dots  & 0 \\
     \vdots & \vdots & \vdots & \ddots & \vdots \\
     0 & 0 & 0 & \dots  & {\bf \tilde B}^{(K)}_{\ell,\ell,K}\left({\bf f}_{\ell}\right)\end{bmatrix}}, \label{eq:Bmat22}
 \end{align}
 where the $j$th sub-block for $j\in\mathcal{K}/\{k\}$ is
 \begin{align}
 {\bf \tilde B}^{(K)}_{\ell,\ell,j}\left({\bf f}_{\ell}\right)=	\sum_{i=1}^Kd_{\ell,\ell, i}({\bf f}_{\ell}){\bf \hat h}_{\ell,\ell,i}{\bf \hat h}_{\ell,\ell,i}^{{\sf H}} + \delta_{\ell,\ell,j}{\bf I}_N \in \mathbb{C}^{N\times N},
 \end{align}
 with $\delta_{\ell,\ell,j}=\sum_{i=1}^Kd_{\ell,\ell, i}({\bf f}_{\ell})\left( \alpha_{\ell,\ell,j}+\frac{{\tilde \sigma}^2_{\ell,i}}{P} \right)$. The $k$th sub-block is
 \begin{align}
 {\bf \tilde B}^{(K)}_{\ell,\ell,k}\left({\bf f}_{\ell}\right)= \delta_{\ell,\ell,k}{\bf I}_N \in \mathbb{C}^{N\times N},
 \end{align}

  We can compute the inverse of this sub-block matrix in a recursive manner by successively applying the Sherman-Morrison formula $\left({\bf A}+{\bf u}{\bf v}^{{\sf H}}\right)^{-1}={\bf A}^{-1}-\frac{{\bf A}^{-1}{\bf u}{\bf v}^{{\sf H}} {\bf A}^{-1}}{1+ {\bf v}^{{\sf H}}{\bf A}^{-1}{\bf u}}$ \cite{Golub_1996}. Let 
 \begin{align}
 	{\bf \tilde B}^{(k)}_{\ell,\ell,j}\left({\bf f}_{\ell}\right) = \delta_{\ell,\ell}{\bf I}_{N}+ \sum_{i=1}^kd_{\ell,\ell, i}({\bf f}_{\ell}){\bf \hat h}_{\ell,\ell,i}{\bf \hat h}_{\ell,\ell,i}^{{\sf H}}.
 	\end{align}
 The inverse of ${\bf \tilde B}^{(k)}_{\ell,\ell,j}\left({\bf f}_{\ell}\right)$ is recursively computed using the previously obtained inverse of ${\bf \tilde B}^{(k-1)}_{\ell,\ell,j}\left({\bf f}_{\ell}\right)$ as
 \begin{align}
 	\left[\!{\bf \tilde B}^{(k-1)}_{\ell,\ell,j}\left({\bf f}_{\ell}\right)\!\right]^{-1} \!&=\! \left[{\bf \tilde B}^{(k-1)}_{\ell,\ell,j}\left({\bf f}_{\ell}\right)\right]^{-1}  \nonumber \\
 	\!&- \!\frac{\left[{\bf \tilde B}^{(k-1)}_{\ell,\ell,j}\left({\bf f}_{\ell}\right)\right]^{-1} {\bf \hat h}_{\ell,\ell,k} {\bf \hat h}_{\ell,\ell,k}^{{\sf H}}\left[\!{\bf \tilde B}^{(k-1)}_{\ell,\ell,j}\left({\bf f}_{\ell}\right)\!\right]^{-1}}{\frac{1}{d_{\ell,\ell, k}({\bf f}_{\ell})}+{\bf \hat h}_{\ell,\ell,k}^{{\sf H}}\left[\!{\bf \tilde B}^{(k-1)}_{\ell,\ell,j}\left({\bf f}_{\ell}\right)\!\right]^{-1}{\bf \hat h}_{\ell,\ell,k}}, \nonumber
 \end{align}
 where the initial inversion is given by
 \begin{align}
 	\left[{\bf \tilde B}^{(1)}_{\ell,\ell,j}\left({\bf f}_{\ell}\right)\right]^{-1} \!&=\left[  d_{\ell,\ell, 1}({\bf f}_{\ell}){{\bf \hat h}_{\ell,\ell,1}}{{\bf \hat h}^{{\sf H}}_{\ell,\ell,1}} +\delta_{\ell,\ell}{\bf I}_N \right]^{-1}\nonumber \\
 	\!&= \delta_{\ell,\ell}^{-1}\left[{\bf I}_N  - \frac{d_{\ell,\ell, 1}({\bf f}_{\ell}) {{\bf \hat h}_{\ell,\ell,1}}{{\bf \hat h}^{{\sf H}}_{\ell,\ell,1}}}{ \delta_{\ell,\ell}+d_{\ell,\ell, 1}({\bf f}_{\ell}) {{\bf \hat h}^{{\sf H}}_{\ell,\ell,1}}{{\bf \hat h}_{\ell,\ell,1}}}\right]\!.
 \end{align}
 Consequently, the inverse of sub-matrix ${\bf \tilde B}^{(k)}_{\ell,\ell,j}$ is obtained with the computational complexity order of $\mathcal{O}\left(KN^2\right)$. This successive matrix inversion technique significantly reduces the total computational complexity order of the proposed power iteration precoding from $\mathcal{O}\left( J\frac{1}{3}KN^3\right)$ to  $\mathcal{O}\left(JKN^2\right)$ for a large number of antenna systems. This computational complexity deduction is particularly interesting because the proposed power iteration precoding requires even a less computational complexity order compared to that of the ZF precoding, which needs $\mathcal{O}( \frac{2}{3}N^3)$, provided that we properly set the number of iterations $J$ and the number of users $K$ such that $JK<N$.

\subsection{Joint User Selection, Power Allocation, and Precoding}
One key feature of the proposed algorithm is that it jointly finds a set of scheduled users, power allocation, and precoding vectors carrying information symbols regardless of the number of users $K$, and the number of antennas $N$. We elucidate this feature by providing an example of $(N,K)=(2,3)$. 

\vspace{0.1cm}
\noindent
{\bf Example 1:} Suppose the channel vectors of the three users in the $\ell$th cell as
\begin{align}
&	{\bf h}_{\ell,\ell,1}=\begin{bmatrix}	
	0.46+0.56j \\ 
	0.08-0.67j
\end{bmatrix}, 	{\bf h}_{\ell,\ell,2}=\begin{bmatrix}	
	0.04+0.33j \\ 
	0.01+0.365j
\end{bmatrix},   ~{\rm and}~ 
{\bf h}_{\ell,\ell,3}=\begin{bmatrix}	
	-0.0031-0.0025j \\ 
	0.0082-0.0038j
\end{bmatrix}.
\end{align}
With an initial solution of MRT precoding, the proposed algorithm yields precoding vectors for the three users when $\frac{P}{{\tilde \sigma}^2_{\ell,k}}=10$ dB as
\begin{align}
&	{\bf f}_{\ell,1}=\begin{bmatrix}	
	0.3554+0.3492j \\ 
	0.1120-0.4573j
\end{bmatrix}, 	{\bf f}_{\ell,2}=\begin{bmatrix}	
	0.1447+0.4697j \\ 
	-0.0647+0.5332j
\end{bmatrix},  ~{\rm and}~ 
{\bf f}_{\ell,3}=\begin{bmatrix}	
	0.000-0.0001j \\ 
	0.000-0.0001j
\end{bmatrix}.
\end{align}
As can be seen in this example, the precoding solutions obtained by the proposed algorithm show that user 1 and user 2 are selected for the transmission because $\|{\bf f}_{\ell,1}\|_2\geq \epsilon$ and $\|{\bf f}_{\ell,2}\|_2\geq \epsilon$, while it deactivates user 3 by assigning a near zero-vector for ${\bf f}_{\ell,3}$. In addition, the solutions contain the power allocation effect because $\|{\bf f}_{\ell,1}\|_2^2=0.47$ and $\|{\bf f}_{\ell,2}\|_2^2=0.53$. As a result, the proposed algorithm jointly provides a set of scheduled users, power allocation, and precoding vectors carrying information symbols.

\section{Extension to Multi-Cell Cooperative Transmission }
In this section, we extend the proposed algorithm for the joint design of user selection, power allocation, and precoding to the multi-cell cooperative downlink transmission. We first explain the BS cooperation model and then show how to extend the proposed GPIP algorithm in the cooperative transmission scenario for multi-cell MU-MIMO systems.

 \subsection{BS Cooperation Model}
 Let $ C(\leq L)$ be the number of cooperative BSs in a cluster.  The $C$ BSs in the cluster are assumed to be connected via high-speed and error-free backhauls. We assume that the cooperative BSs have global knowledge of imperfect CSIT in the cluster, while no data sharing between the BSs is considered. This assumption is feasible because each BS estimates CSIT between itself and all downlink users using the orthogonal uplink pilot transmission in the cooperative cluster (e.g. $\tau^{\rm ul}\geq CK$) and it is shared through the backhauls in every channel coherence intervals $T_{\rm c}$ to perform the cooperative downlink transmission. Our BS cooperation model differs from cell-free massive MIMO and CoMP joint transmission methods in \cite{Gesbert_Yu_2010,Huh_Caire_2011,Huh_Caire_2012,Ngo_Marzetta_2017} in which all downlink data symbols are shared by the BSs using backhauls. Rather, this cooperative transmission strategy is well-known to as multi-cell coordinated scheduling/beamforming in LTE systems \cite{Lee_Lozano_2014}.
 
Let ${\rm IUI}_{\ell,k}$ and ${\rm ICI}_{\ell,k}$ be the aggregated inter-user-interference and the inter-cell-interference received at the $k$th user in the $\ell$th cell, which are defined as
\begin{align}
	{\rm IUI}_{\ell,k}&=\sum_{i\neq k}^K|{\bf \hat h}^{{\sf H}}_{\ell,\ell,k}{\bf f}_{\ell,i}|^2 +\sum_{k=1}^K\mathbb{E}\left[|{{\bf e}^{{\sf H}}_{\ell,\ell,k}}{\bf f}_{\ell,i}|^2\right] \nonumber\\
	&=\sum_{i\neq k}^K {\bf f}_{\ell,i}^{\sf H}{\bf \hat h}^{{\sf H}}_{\ell,\ell,k}{\bf \hat h}^{{\sf H}}_{\ell,\ell,k}{\bf f}_{\ell,i} +\sum_{k=1}^K{\bf f}_{\ell,i}^{\sf H}{\bf \Phi}_{\ell,\ell,k}{\bf f}_{\ell,i} \label{eq:IUI}
\end{align}
and
\begin{align}
	{\rm ICI}_{\ell,k}&=\sum_{j\neq \ell}^{C}\sum_{i=1}^K|{\bf \hat h}^{{\sf H}}_{j,\ell,k}{\bf f}_{j,i}|^2 +\sum_{j\neq \ell}^{C}\sum_{k=1}^K\mathbb{E}\left[|{{\bf e}^{{\sf H}}_{j,\ell,k}}{\bf f}_{j,i}|^2\right] \nonumber\\
	&=\sum_{j\neq \ell}^{C}\sum_{i=1}^K {\bf f}_{j,i}^{\sf H}{\bf \hat h}^{{\sf H}}_{j,\ell,k}{\bf \hat h}^{{\sf H}}_{j,\ell,k}{\bf f}_{j,i} +\sum_{j\neq \ell}^{C}\sum_{i=1}^K{\bf f}_{j,i}^{\sf H} {\bf \Phi}_{j,\ell,k}{\bf f}_{j,i}. \label{eq:ICI}
\end{align}
Then, the achievable spectral efficiency for the $k$th downlink user in the $\ell$th cell with global and imperfect CSIT is 
 \begin{align}
    	 	{R}_{\ell,k}\left({\bf \hat H}_{1,1},\ldots, {\bf \hat H}_{C,C}\right)\!=\!\log_2\!\left(\!1\!+\!\frac{ |{{\bf \hat h}^{{\sf H}}_{\ell,\ell,k}}{\bf f}_{\ell,k}|^2}{{\rm IUI}_{\ell,k}+{\rm ICI}_{\ell,k} \!+\! \frac{{\bar \sigma}^2_{\ell,k}}{P} }\right),
\end{align}
where $\bar{\sigma}^2_{\ell,k}$ is the effective noise variance of the $k$th user in the $\ell$th cell, which includes the sum of the out-of-cluster interference and noise power.

\subsection{Cooperative Precoding for Multi-Cell MU-MIMO}
The our precoding design problem is that of maximizing the weighted spectral efficiencies of all downlink users in the cooperative cluster subject to the transmission power constraint of the individual BS. This optimization problem is formulated as
\begin{align}
 \underset{{\bf f}_{1,1},\ldots,{\bf f}_{C,K}}{\text{max}}
 &\sum_{\ell=1}^{C}\sum_{k=1}^Kw_{\ell,k}{R}_{\ell,k}\left({\bf \hat H}_{1,1},\ldots, {\bf \hat H}_{C,C}\right) \nonumber \\ \label{eq:objectfunction_multicell}
\text{subject to}
& \sum_{k=1}^K\|{\bf f}_{\ell,k}\|_2^2\leq 1~~ {\rm for}~~\ell\in\{1,\ldots,C\}.
\end{align} 
Notice that the object function of the optimization problem in \eqref{eq:objectfunction_multicell} is reformulated as 
\begin{align}
 \sum_{\ell=1}^{C}\sum_{k=1}^Kw_{\ell,k}{R}_{\ell,k}\left({\bf \hat H}_{1,1},\ldots, {\bf \hat H}_{{C},C}\right) &= \sum_{\ell=1}^{C}\sum_{k=1}^Kw_{\ell,k}\!\log_2\!\left(\!1\!+\!\frac{ |{{\bf \hat h}^{{\sf H}}_{\ell,\ell,k}}{\bf f}_{\ell,k}|^2}{{\rm IUI}_{\ell,k}+{\rm ICI}_{\ell,k} \!+\! \frac{\bar{\sigma}^2_{\ell,k}}{P} }\right) \nonumber \\
& =\!\log_2\!\left( \left[ \prod_{\ell=1}^{C}\prod_{k=1}^K\frac{ |{{\bf \hat h}^{{\sf H}}_{\ell,\ell,k}}{\bf f}_{\ell,k}|^2 +{\rm IUI}_{\ell,k}+{\rm ICI}_{\ell,k} \!+\! \frac{ \bar{\sigma}^2_{\ell,k}}{P}}{ {\rm IUI}_{\ell,k}+{\rm ICI}_{\ell,k} \!+\! \frac{ \bar{\sigma}^2_{\ell,k}}{P} }\right]^{w_{\ell,k}}\right). \label{eq:multicell_rate}
 \end{align}
Let ${\bf f}_{\ell}=\left[{\bf f}_{\ell,1}^{\top},{\bf f}_{\ell,2}^{\top},\ldots,{\bf f}_{\ell,K}^{\top}\right]^{\top}\in\mathbb{C}^{NK\times 1}$ and ${\bf f}=\left[{\bf f}_{1}^{\top},{\bf f}_{2}^{\top},\ldots,{\bf f}_{C}^{\top}\right]^{\top}\in\mathbb{C}^{NKC\times 1}$. We also redefine the numerator in \eqref{eq:multicell_rate} using \eqref{eq:IUI} and \eqref{eq:ICI} as a quadratic function with respective to ${\bf f}$, namely,
\begin{align}
	& |{{\bf \hat h}^{{\sf H}}_{\ell,\ell,k}}{\bf f}_{\ell,k}|^2 +{\rm IUI}_{\ell,k}+{\rm ICI}_{\ell,k} \!+\! \frac{\bar{\sigma}^2_{\ell,k}}{P} \nonumber \\
	 & = \sum_{i=1}^K {\bf f}_{\ell,i}^{\sf H}{\bf \hat h}^{{\sf H}}_{\ell,\ell,k}{\bf \hat h}^{{\sf H}}_{\ell,\ell,k}{\bf f}_{\ell,i} +\sum_{k=1}^K{\bf f}_{\ell,i}^{\sf H}{\bf \Phi}_{\ell,\ell,k}{\bf f}_{\ell,i} \nonumber \\
	 &+\sum_{j\neq \ell}^{C}\sum_{i=1}^K {\bf f}_{j,i}^{\sf H}{\bf \hat h}^{{\sf H}}_{j,\ell,k}{\bf \hat h}^{{\sf H}}_{j,\ell,k}{\bf f}_{j,i} +\sum_{j\neq \ell}^{C}\sum_{i=1}^K{\bf f}_{j,i}^{\sf H} {\bf \Phi}_{j,\ell,k}{\bf f}_{j,i}+\frac{\bar{\sigma}^2_{\ell,k}}{P} \nonumber\\
	 &={\bf f}{\bf A}_{\ell,k}^{\rm coop}{\bf f},\label{eq:nominator}
\end{align}
where 
    \begin{align}
 &\!\!\!\!\!\!\!\!\!     {\bf A}_{\ell,k}^{\rm coop}= \begin{bmatrix}
      {\bf A}_{1,\ell,k}^{\rm coop}&0 &0   & \dots  & 0 \\     
    \vdots &  \ddots &\vdots  & \ddots & \vdots \\
    0 &   0 &{\bf A}_{\ell,\ell,k}^{\rm coop} & \dots  & 0 \\
    \vdots &  \vdots & \vdots & \ddots & \vdots \\
    0 & 0 &0 &   \dots  &  {\bf A}_{C,\ell,k}^{\rm coop} \nonumber
\end{bmatrix}\in \mathbb{C}^{CNK\times CNK}
\end{align}
is the positive-semidefinite and block diagonal matrix with the size of $LNK$ by $LNK$ whose $j$th sub-block matrix with the size of $NK$-by-$NK$ also has the block diagonal structure as
 \begin{align}
  {\bf A}_{j,\ell,k}^{\rm coop}={\rm diag}\!\left(\!{{\bf \hat h}_{j,\ell,1}}{{\bf \hat h}^{{\sf H}}_{j,\ell,1}}\!\!+\!{\bf \Phi}_{j,\ell,1}\!\!+\!\!\frac{\bar{\sigma}^2_{\ell,k}}{P }{\bf I}_N,\!
  \cdots \!, {{\bf \hat h}_{j,\ell,K}}{{\bf \hat h}^{{\sf H}}_{j,\ell,K}}\!\!+\!\!{\bf \Phi}_{j,\ell,K}\!\!+\!\!\frac{\bar{\sigma}^2_{\ell,k}}{P }{\bf I}_N\!\right)\!.
\end{align}
Similarly, we can define the denominator in \eqref{eq:multicell_rate} using \eqref{eq:IUI} and \eqref{eq:ICI} as a quadratic function with respective to ${\bf f}$, i.e.,
\begin{align}
	&{\rm IUI}_{\ell,k}+{\rm ICI}_{\ell,k} \!+\! \frac{\bar{\sigma}^2_{\ell,k}}{P} \nonumber \\
	 & = \sum_{i\neq k}^K {\bf f}_{\ell,i}^{\sf H}{\bf \hat h}^{{\sf H}}_{\ell,\ell,k}{\bf \hat h}^{{\sf H}}_{\ell,\ell,k}{\bf f}_{\ell,i} +\sum_{k=1}^K{\bf f}_{\ell,i}^{\sf H}{\bf \Phi}_{\ell,\ell,k}{\bf f}_{\ell,i} \nonumber \\
	 &+\sum_{j\neq \ell}^{C}\sum_{i=1}^K {\bf f}_{j,i}^{\sf H}{\bf \hat h}^{{\sf H}}_{j,\ell,k}{\bf \hat h}^{{\sf H}}_{j,\ell,k}{\bf f}_{j,i} +\sum_{j\neq \ell}^{C}\sum_{i=1}^K{\bf f}_{j,i}^{\sf H}{\bf \Phi}_{j,\ell,k}{\bf f}_{j,i}+\frac{\bar{\sigma}^2_{\ell,k}}{P} \nonumber\\
	 &={\bf f}{\bf B}_{\ell,k}^{\rm coop}{\bf f}, \label{eq:denominator}
\end{align}
where     \begin{align}
 &\!\!\!\!\!\!\!\!\!     {\bf B}_{\ell,k}^{\rm coop}= \begin{bmatrix}
      {\bf B}_{1,\ell,k}^{\rm coop}&0 &0   & \dots  & 0 \\     
    \vdots &  \ddots &\vdots  & \ddots & \vdots \\
    0 &   0 &{\bf B}_{\ell,\ell,k}^{\rm coop} & \dots  & 0 \\
    \vdots &  \vdots & \vdots & \ddots & \vdots \\
    0 & 0 &0 &   \dots  &  {\bf B}_{C,\ell,k}^{\rm coop} 
    \end{bmatrix}\in \mathbb{C}^{CNK\times CNK} \nonumber
\end{align}
is also the positive-semidefinite and block diagonal matrix with the size of $CNK$-by-$CNK$. The $j$th sub-block matrix of ${\bf B}_{\ell,k}^{\rm coop}$ has the size of $NK$-by-$NK$, and it has the block diagonal structure defined as 
    \begin{align}
	{\bf B}_{\ell,\ell,k}^{\rm coop} &= {\bf A}_{\ell,\ell,k}^{\rm coop}- \begin{bmatrix}
   0 &0 &0 & \dots  & 0 \\
     \vdots & \vdots & \vdots & \ddots & \vdots \\
    0 & \cdots &  {{\bf \hat h}_{\ell,\ell,k}}{{\bf \hat h}^{{\sf H}}_{\ell,\ell,k}} & \dots  & 0 \\
    \vdots & \vdots & \vdots & \ddots & \vdots \\
    0 & 0 & 0 & \dots  & 0\end{bmatrix}. 
\end{align}
Using this concatenated multi-cell precoding vector and the system matrices defined in \eqref{eq:nominator} and \eqref{eq:denominator}, the multi-cell precoding design problem is represented as the maximization problem of the product of Rayleigh quotients as follows:
  \begin{align}
\underset{{\bf f}\in\mathbb{C}^{CKL\times 1} }{\text{max}}
& \prod_{\ell=1}^{C}\prod_{k=1}^K \left[ \frac{ {\bf f}^{\sf H} {\bf A}_{\ell,k}^{\rm coop} {\bf f} }{ {\bf f}^{\sf H} {\bf B}_{\ell,k}^{\rm coop} {\bf f} }\right]^{w_{\ell,k}}  \\
 \text{subject to}
&~ \|{\bf f}_{\ell}\|_2^2\leq 1~~ {\rm for}~~\ell\in\{1,\ldots,C\}. \label{eq:multicell_coop_opti}
\end{align} 
Since this optimization problem is also non-convex, we find a suboptimal solution by finding the first order optimality condition when relaxing the individual BS power constraint to the sum-power constraint, i.e., $ \|{\bf f}\|_2^2\leq C$.  Similar to the non-cooperative transmission case, since $\lambda^{\rm coop}({\bf f}) =\lambda^{\rm coop}(\alpha{\bf f}) $ for some $\alpha >0$, it is possible to find the first order-KKT condition ignoring the sum-power constraint. The first-order KKT condition of the optimization problem in \eqref{eq:multicell_coop_opti} with the sum-power constraint is given below:
%\begin{lem}
%  		The first order KKT condition, i.e., $\frac{\partial \lambda({\bf f})}{\partial {\bf f}^{\sf H}}=0$ satisfies 
%  		\begin{align}
%  			&\prod_{\ell=1}^L\prod_{k=1}^K \left({\bf f}^{{\sf H}}{\bf A}_{ \ell,k}^{\rm coop}{\bf f} \right)^{w_{\ell,k}}\!\!\underbrace{\left[\sum_{\ell=1}^L\sum_{i=1}^Kw_{\ell,i}\left( {\bf f}^{{\sf H}}{\bf \tilde A}_{j,i}{\bf f}\right)^{\!w_{\ell,i}\!-\!1}\left(\prod_{\ell \neq j}^L\prod_{k\neq i}^K {\bf f}^{{\sf H}}{\bf \tilde A}_{\ell,k}{\bf f}\right) {\bf \tilde A}_{j,i}\right]}_{{\bf \bar A}_{\ell, k}({\bf f} ) }{\bf f}   \nonumber \\
%  			&\!\!\!\!\!=\prod_{\ell=1}^L\prod_{k=1}^K\left( {\bf f}^{{\sf H}}{\bf \tilde B}_{\ell,k}{\bf f} \right)^{w_{\ell,k}}\!\!\underbrace{\left[\sum_{\ell=1}^L\sum_{i=1}^Kw_{\ell,i}\left( {\bf f}^{{\sf H}}{\bf \tilde B}_{j,i}{\bf f}\right)^{\!w_{\ell,i}\!-\!1}\left(\prod_{\ell \neq j}^L\prod_{k\neq i}^K {\bf f}^{{\sf H}}{\bf \tilde B}_{\ell,k}{\bf f}\right) {\bf \tilde B}_{\ell,i}\right]}_{{\bf \bar B}_{\ell,k}({\bf f})} {\bf f}.
%  		\end{align}
%  	\end{lem}

\begin{lem}
	The first order KKT condition, i.e., $\frac{\partial \lambda^{\rm coop}({\bf f})}{\partial {\bf f}^{\sf H}}=0$ with the sum-power constraint $\|{\bf f}\|_2\leq C$ satisfies 
  		\begin{align}
  			{\bf \tilde A}^{\rm coop}({\bf f}) {\bf f}=\lambda^{\rm coop}({\bf f}) {\bf \tilde B}^{\rm coop}({\bf f}) {\bf f}, \label{eq:KKT}
  		\end{align}
where 
\begin{align}
	{\bf \tilde A}^{\rm coop}({\bf f})&\!=\!\! \sum_{\ell=1}^{C}\sum_{k=1}^K\!w_{\ell,k}\left( {\bf f}^{{\sf H}}{\bf  A}^{\rm coop}_{\ell,k}{\bf f}\right)^{\!w_{\ell,k}\!-\!1}\!\left(\prod_{j \neq \ell }^{C}\prod_{i \neq k}^K {\bf f}^{{\sf H}}{\bf  A}^{\rm coop}_{j,i}{\bf f}\right) {\bf A}^{\rm coop}_{\ell,k} , \nonumber\\
	{\bf \tilde B}^{\rm coop}({\bf f})&\!=\!\!  \sum_{\ell=1}^{C}\sum_{k=1}^K\!w_{\ell,k}\left( {\bf f}^{{\sf H}}{\bf  B}^{\rm coop}_{\ell,k}{\bf f}\right)^{\!w_{\ell,k}\!-\!1}\!\left(\prod_{j \neq \ell}^{C}\prod_{i \neq k}^K {\bf f}^{{\sf H}}{\bf  B}_{j,i}^{\rm coop}{\bf f}\right) {\bf  B}_{\ell,k}^{\rm coop}, \nonumber\\
	\lambda^{\rm coop}({\bf f})&=\prod_{\ell=1}^{C}\prod_{k=1}^K \left[ \frac{ {\bf f}^{\sf H} {\bf A}_{\ell,k}^{\rm coop} {\bf f} }{ {\bf f}^{\sf H} {\bf B}_{\ell,k}^{\rm coop} {\bf f} }\right]^{w_{\ell,k}}  . 
\end{align}
\end{lem}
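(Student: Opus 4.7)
The plan is to mirror the argument used for Lemma 1, but at the level of the concatenated precoding vector ${\bf f}\in\mathbb{C}^{CNK\times 1}$, so that the cell index $\ell$ and the user index $k$ are treated symmetrically and simply range over the composite index set $\{1,\ldots,C\}\times\{1,\ldots,K\}$. First I would observe the scale invariance $\lambda^{\rm coop}(\alpha{\bf f})=\lambda^{\rm coop}({\bf f})$ for every nonzero $\alpha\in\mathbb{C}$, which follows because each factor $\left[{\bf f}^{\sf H}{\bf A}_{\ell,k}^{\rm coop}{\bf f}/{\bf f}^{\sf H}{\bf B}_{\ell,k}^{\rm coop}{\bf f}\right]^{w_{\ell,k}}$ is homogeneous of degree zero in ${\bf f}$. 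Consequently, exactly as in the derivation preceding Lemma 1, the sum power constraint $\|{\bf f}\|_2^2\le C$ can be dropped when writing the stationarity condition: any stationary point of the unconstrained objective can be rescaled to a feasible point that satisfies the first-order KKT condition.

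Next I would introduce the shorthand $F({\bf f})=\prod_{\ell,k}\bigl({\bf f}^{\sf H}{\bf A}_{\ell,k}^{\rm coop}{\bf f}\bigr)^{w_{\ell,k}}$ and $G({\bf f})=\prod_{\ell,k}\bigl({\bf f}^{\sf H}{\bf B}_{\ell,k}^{\rm coop}{\bf f}\bigr)^{w_{\ell,k}}$, so that $\lambda^{\rm coop}({\bf f})=F({\bf f})/G({\bf f})$. Taking the Wirtinger derivative with respect to ${\bf f}^{\sf H}$ and applying the product rule factor by factor gives
\begin{align}
\frac{\partial F}{\partial {\bf f}^{\sf H}}
=\sum_{\ell=1}^{C}\sum_{k=1}^{K} w_{\ell,k}\bigl({\bf f}^{\sf H}{\bf A}_{\ell,k}^{\rm coop}{\bf f}\bigr)^{w_{\ell,k}-1}\!\!\!\prod_{(j,i)\neq(\ell,k)}\!\!\!\bigl({\bf f}^{\sf H}{\bf A}_{j,i}^{\rm coop}{\bf f}\bigr)^{w_{j,i}}\,{\bf A}_{\ell,k}^{\rm coop}{\bf f},
\end{align}
and an identical formula for $\partial G/\partial{\bf f}^{\sf H}$ with ${\bf B}_{\ell,k}^{\rm coop}$ in place of ${\bf A}_{\ell,k}^{\rm coop}$. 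Here I am using the well-known identity $\partial({\bf f}^{\sf H}{\bf M}{\bf f})/\partial{\bf f}^{\sf H}={\bf M}{\bf f}$ for Hermitian ${\bf M}$ together with the chain rule applied to the exponent $w_{\ell,k}$.

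I would then apply the quotient rule to obtain
\begin{align}
\frac{\partial\lambda^{\rm coop}}{\partial{\bf f}^{\sf H}}
=\frac{1}{G({\bf f})}\left[\frac{\partial F}{\partial{\bf f}^{\sf H}}-\lambda^{\rm coop}({\bf f})\frac{\partial G}{\partial{\bf f}^{\sf H}}\right]=0,
\end{align}
which is equivalent to $\partial F/\partial{\bf f}^{\sf H}=\lambda^{\rm coop}({\bf f})\,\partial G/\partial{\bf f}^{\sf H}$. Substituting the explicit expressions and collecting the scalar weights into the matrices ${\bf \tilde A}^{\rm coop}({\bf f})$ and ${\bf \tilde B}^{\rm coop}({\bf f})$ as defined in the statement yields ${\bf \tilde A}^{\rm coop}({\bf f}){\bf f}=\lambda^{\rm coop}({\bf f})\,{\bf \tilde B}^{\rm coop}({\bf f}){\bf f}$, which is the claimed KKT condition.

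The only real subtlety, and the thing I would be most careful about, is reconciling the double-product notation $\prod_{j\neq\ell}\prod_{i\neq k}$ appearing in the statement with the set-theoretic product $\prod_{(j,i)\neq(\ell,k)}$ that actually arises from differentiating $\prod_{\ell,k}$. Apart from this indexing convention, the argument is a verbatim generalization of Lemma 1 (treating the composite index $(\ell,k)$ as the single index $i$ used there); the positive-definiteness of ${\bf A}_{\ell,k}^{\rm coop}$ and ${\bf B}_{\ell,k}^{\rm coop}$ ensures that $G({\bf f})>0$ so that the quotient rule and subsequent division are well defined throughout the feasible set.
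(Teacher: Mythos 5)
Your proposal is correct and takes essentially the same route as the paper, which simply states that the proof ``is direct from the proof of Lemma 1''; you have filled in exactly that argument by applying the quotient and product rules to the composite index $(\ell,k)$ after exploiting scale invariance to drop the power constraint. Your side remark that the paper's double product $\prod_{j\neq\ell}\prod_{i\neq k}$ should really be read as the product over all pairs $(j,i)\neq(\ell,k)$ is a fair observation about the statement's notation, but it does not affect the validity of the argument.
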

\begin{proof}
	The proof is direct from the proof of Lemma 1.
	\end{proof}

By modifying the GPIP algorithm in Table \ref{tab:GPI_AlG}, we present the multi-cell cooperative precoding algorithm as in Table \ref{tab:GPI_AlG2}. The multi-cell cooperative precoding algorithm is almost identical to the GPIP algorithm in Table \ref{tab:GPI_AlG}. One major difference is the scaling operation in Step 4. Since this algorithm finds the solution without considering the individual power constraint, we need to normalize the solution to satisfy the individual BS power constraint, i.e., $\|{\bf f}_{\ell}\|_2=1$. To accomplish this, we rescale the maximum eigenvector of $\left[{\bf \tilde B}^{\rm coop}({\bf f})\right]^{-1}{\bf \tilde A}^{\rm coop}({\bf f})$ by $\max_{\ell \in \mathcal{L}}\left\{\|{\bf f}^{(m)}_{\ell}\|_2 \right\}$. This normalization guarantees to satisfy the individual BS power constraint in \eqref{eq:multicell_coop_opti}. The multi-cell cooperative precoding algorithm in Table \ref{tab:GPI_AlG2} converges to a stationary point of the optimization problem in \eqref{eq:multicell_coop_opti} with the sum-power constraint. In addition, thank to the systematic block diagonal structure of the matrices ${\bf A}_{\ell,k}^{\rm coop} $ and ${\bf B}_{\ell,k}^{\rm coop} $, one can easily show that the computational complexity of the multi-cell precoding algorithm scales quadratically with the number of cooperative BSs ${C}$, i.e., $\mathcal{O}(JK{C}^2N^2)$.

%
%
%{\bf Remark 2 (Multi-cell scheduling):}  Similar to the non-cooperation case, the proposed algorithm performs multi-cell scheduling by selecting a subset of users among the entire $KL$ downlink users in the cooperative cells so as to maximize the weighted sum-spectral efficiency. Specifically, once we obtain the solution vector ${\bf f}^{\star}\in \mathbb{C}^{LKN}$ using the power iteration precoding for multi-cell precoding, it is possible that some subvectors of ${\bf f}^{\star}\in \mathbb{C}^{LKN}$ converge to near-zero, i.e., ${\bf f}_{\ell,k}\simeq {\bf 0}\in \mathbb{C}^{N}$. This implies that the proposed algorithm does not choose the $k$th user in the $\ell$th cell to be scheduled in order to maximize the weighted sum-spectral efficiency. Whereas, if $\|{\bf f}_{\ell,k}\|_2>\epsilon$, the $\ell$th BS should send the data symbol for the corresponding downlink user. Therefore, the scheduled user set by the proposed algorithm can be defined as
%\begin{align}
%	\mathcal{S} =\left\{(\ell, k) \mid \|{\bf f}_{\ell,k}\|_2>\epsilon 
%	\right\}, 
%\end{align}
%for $\ell \in \mathcal{L}$ and $k\in \mathcal{K}$. 

 \begin {table}[]
\caption {Generalized Power Iteration for Multi-Cell Precoding} \label{tab:GPI_AlG2} 
 \vspace{-0.5cm} 	 \begin{center}
  \begin{tabular}{ l | c }
    \hline\hline
    Step 1 & Initialize ${\bf f}^0$ (MRT)  \\ \hline
        Step 2 & In the $m$-th iteration,  \\ \hline
     & Compute $\left[{\bf \tilde B}^{\rm coop}\left({\bf f}^{(m-1)}\right)\right]^{-1}{\bf \tilde A}^{\rm coop}\left({\bf f}^{(m-1)}\right)$ \\ \hline     
     & ${\bf f}^{(m)}:=\left[{\bf \tilde B}^{\rm coop}\left({\bf f}^{(m-1)}\right)\right]^{-1}{\bf \tilde A}^{\rm coop}\left({\bf f}^{(m-1)}\right){\bf f}^{(m-1)}$ \\ \hline
          & ${\bf f}^{(m)}:=\frac{{\bf f}^{(m)}}{\|{\bf f}^{(m)}\|_2}$ \\ \hline
    Step 3 & Iterates until $\|{\bf f}^{(m-1)}-{\bf f}^{(m)}\|_2\leq \epsilon$\\ \hline
    Step 4 & Rescaling:  ${\bf f}^{(m)}:=\frac{\|{\bf f}^{(m)}\|_2}{\max_{\ell \in \mathcal{L}}\left\{\|{\bf f}^{(m)}_{\ell}\|_2 \right\}}$\\
    \hline
   \hline
  \end{tabular}
\end{center} \vspace{-1cm}
\end {table}

\section{Simulation Results}
In this section, we provide both the link and the system level simulation results to compare the performance of the proposed GPIP with those of the existing precoding methods in downlink multi-cell MU-MIMO systems. We assume that the BS is equipped with uniform circular array with $N$ isotropic antennas in which the antenna elements are equally spaced on a circle of radius. The circle of radius is set to $\lambda D$, where $D = \frac{0.5}{\sqrt{ \left( 1 - \cos(2\pi/N ) \right)^2 + \sin(2\pi/N)^2 }}$ leads to the minimum distance $\lambda/2$ between adjacent antennas.

%The assumptions of link level simulations is that we consider a single cell operation and ignore large scale terms, i.e, $L = 1$ and $\beta_{i,j,k} = 1 , \forall i,j,k$. The assumptions of system-level simulations are summarized in Table 2.
%%%%%%%%%%%%%%%%%%%%%%%%%%%%%%%%%%%%%%%%%%%%%%%%%%%%%%%%%%%%%%%%%%%%%%%%%%%%
\subsection{Link Level Simulations}
We first present the link level simulation results for the single-cell ($L=1$) MU-MIMO system, in which the large-scale fading terms are ignored, i.e., $\beta_{\ell,\ell,k} = 1$, and the small-scale fading terms are generated by using the geometric one-ring scattering model in \eqref{eq:One_ring_channel_model}, i.e., ${\bf h}_{\ell,\ell,k} \sim \mathcal{CN}({\bf 0},{\bf R}_{\ell,\ell,k})$. We assume that the users are uniformly located at an azimuth angle $\theta_{\ell,\ell,k} = 2 \pi k /K$ and angular spread $\Delta_{\ell,\ell,k} = \pi/6$. We set the initial solution of the proposed GPIP as the MRT solution. In addition, we set the tolerance level of $\epsilon=0.01$ for the proposed GPIP in Table I. For link level simulations, we use the uniform weight values, i.e., $w_{\ell,k} = 1$.

We compare the proposed GPIP with the following well-known precoding and user selection algorithms:

\begin{itemize}
 \item ZF-DPC \cite{Caire_Shamai_2003}: this scheme serves as the information-theoretical upper-bound of the downlink sum-spectral efficiency for the MU-MIMO systems when perfect CSIT is available. The water-filling power allocation method is applied; 
	\item SUS-ZF \cite{Yoo_Goldsmith_2005}: this algorithm refers the user selection algorithm based on semi-orthogonal user selection and zero-forcing precoding. The computational complexity of this algorithm is $\mathcal{O}\left(KN^3\right)$;
	
	\item RRZF \cite{Wang2012}: this scheme refers a robust regularized zero-forcing (RRZF) precoding, which can improve the performance of regularized zero-forcing precoding (RZF) with imperfect CSIT. The RRZF precoding solution of the $\ell$th BS is given by
	\begin{align}
    {\bf F}_{\ell}^{\rm RRZF} = {\bf {\hat H}}^{\sf H}_{\ell,\ell}\left( {\bf {\hat H}}_{\ell,\ell}{\bf {\hat H}}^{\sf H}_{\ell,\ell} +\sum_{k = 1}^{K} {\bf \Phi}_{\ell,\ell,k}  +\frac{\sigma^2}{P}{\bf I}_{N} \right)^{-1},
\end{align}
where ${\bf F}_{\ell} = \left[ {\bf f}_{\ell,1}, \ldots,  {\bf f}_{\ell,K} \right] \in \mathbb{C}^{N \times K}$. This RRZF provides the same performance with that of RZF when perfect CSIT is available.
	
	\item Rank-adaptation with ZF: this algorithm selects a set of scheduled users so that it maximizes the sum-spectral efficiency with ZF precoding in a greedy manner. Specifically, it first selects a user who produces the maximum single-user capacity. Then, the algorithm finds the second user so that it can yield the maximum of the sum-spectral efficiency together with the previously selected user when applying ZF precoding. In this manner, it performs the greedy-user selection by adding users, until there is no increase of the sum-spectral efficiency. Notice that the computational complexity of this method is much higher than that of SUS-ZF and the proposed GPIP. This is because this greedy algorithm needs to perform the $r\times N$ matrix inversion $(r-1)$ times in the $r$th iteration for ZF precoding to check whether the sum-spectral efficiency increases or not by adding more users.
\end{itemize}
%%%%%%%%%%%%%%%%%%%%%%%%%%%%%%%%%%%%%%%%%%%%%%%%%%%%%%%%%%
\begin{figure} [t]
	\centering
    \includegraphics[width=12cm]{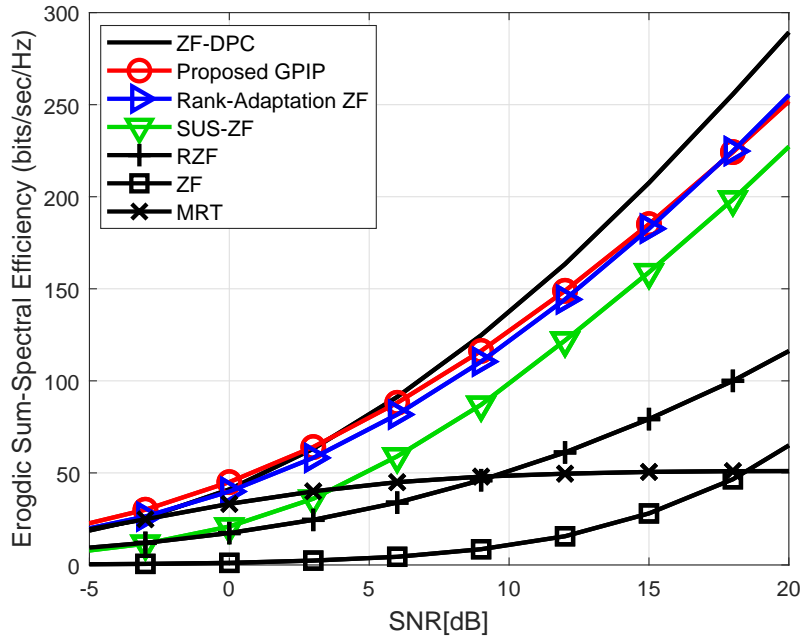}
  \caption{Ergodic sum-spectral efficiency comparisons under perfect CSIT.}\vspace{-0.5cm} \label{Fig_Sim_Link_Perfect_CSIT}
\end{figure}
%%%%%%%%%%%%%%%%%%%%%%%%%%%%%%%%%%%%%%%%%%%%%%%%%%%%%%%%%

{\bf Perfect CSIT:} Fig. \ref{Fig_Sim_Link_Perfect_CSIT} shows the achievable sum-spectral efficiencies of the different downlink transmission strategies under the perfect CSIT assumption when $(N,K) = (64,64)$. As shown in Fig. \ref{Fig_Sim_Link_Perfect_CSIT}, the proposed GPIP provides a higher sum-spectral efficiency in all SNR regimes compared to the existing linear precoding methods including ZF, regularized ZF (RZF), and the SUS-ZF algorithm in \cite{Yoo_Goldsmith_2005}. Specifically, the proposed GPIP attains about $3$ dB and $1.5$ dB SNR gains at low and mid SNR regimes compared to the SUS-ZF method. In addition, the proposed GPIP is slightly better than the rank-adaptation ZF method when SNR is below $15$ dB. One remarkable result is that the proposed GPIP achieves a near optimal sum-spectral efficiency attained by ZF-DPC with water-filling power control in the low SNR regime. Nevertheless, the performance gap between the proposed GPIP and ZF-DPC becomes larger as the SNR increases due to the limitation of linear processing. 

% {\bf Perfect CSIT:} Fig. \ref{Fig_Sim_Link_Imperfect_CSIT} shows the achievable sum-spectral efficiencies of the different downlink transmission strategies under the perfect CSIT assumption when $(N,K) = (64,64)$. As shown in Fig. \ref{Fig_Sim_Link_Imperfect_CSIT}, the proposed power iteration precoding provides a higher sum-spectral efficiency in all SNR regimes compared to the existing linear precoding methods including ZF, regularized ZF (RZF), the SUS-ZF algorithm in \cite{Yoo_Goldsmith_2005}, and ZF-DPC in \cite{Caire_Shamai_2003} as the benchmark. One remarkable result is that the proposed power iteration precoding achieves a near optimal sum-spectral efficiency attained by ZF-DPC with water-filling power control in the low SNR regime (0-4 dB). Nevertheless, the performance gap between the proposed one and ZF-DPC becomes larger as the SNR increases due to the limitation of linear processing. Another interesting observation is that GPI-ZF and SUS-ZF achieves almost similar sum-spectral efficiencies. Whereas, when using the proposed power iteration precoding as the joint solution of the precoding, power allocation, and scheduler, the proposed power iteration precoding attains about $3$ dB and $1.5$ dB SNR gains at low and mid SNR regimes compared to the SUS-ZF method. This result implies that the proposed power iteration precoding is able to obtain additional SNR gains by superior power allocation and precoding compared to the SUS-ZF method. 
%%%%%%%%%%%%%%%%%%%%%%%%%%%%%%%%%%%%%%%%%%%%%%%%%%%%%%
\begin{figure} [t]
	\centering
    \includegraphics[width=12cm]{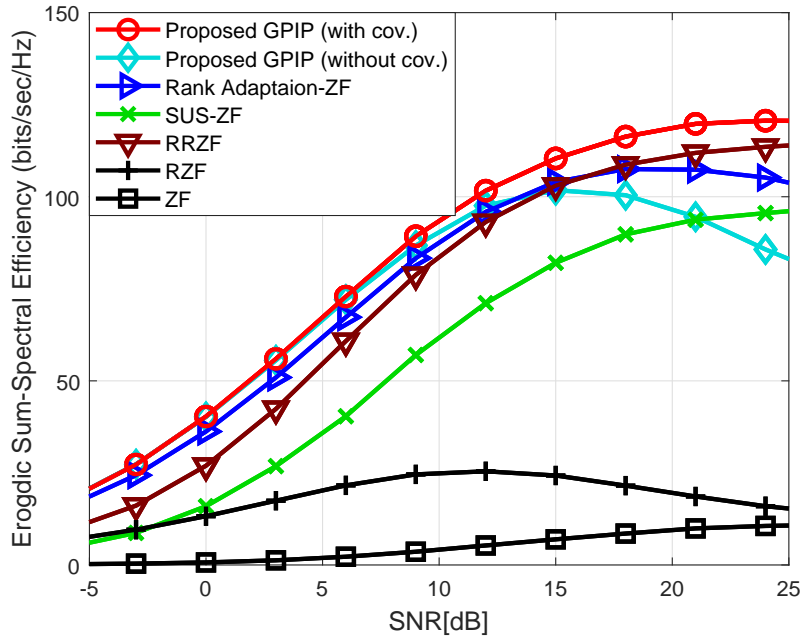}
  \caption{Ergodic sum-spectral efficiency comparisons under imperfect CSIT.} \label{Fig_Sim_Link_Imperfect_CSIT}\vspace{-0.4cm}
\end{figure}
%%%%%%%%%%%%%%%%%%%%%%%%%%%%%%%%%%%%%%%%%%%%%%%%%%%%%%
%Also, the proposed GPI-ZF\footnote{For the proposed GPI-ZF, we set the $\epsilon$ value in \eqref{eq:GPI-ZF} to be $0.001$.} has a similar sum-spectral efficiency with the SUS-ZF. 

{\bf Imperfect CSIT:} To see the robustness to the channel errors, we also compare the achievable sum-spectral efficiencies of the different transmission strategies under  imperfect CSIT assumption. This imperfect CSIT model considered is ${\bf \hat h}_{\ell,\ell,k}={\bf h}_{\ell,\ell,k}+{\bf e}_{\ell,\ell,k}$ where ${\bf e}_{\ell,\ell,k}\sim \mathcal{CN}({\bf 0},0.1\times {\bf I}_N)$, i.e., ${\bf \Phi}_{\ell,\ell,k}=0.1\times{\bf I}_N$. In Fig. \ref{Fig_Sim_Link_Imperfect_CSIT}, we consider two imperfect CSIT scenarios. The first scenario is the case where the BS has perfect knowledge of the error covariance matrix, i.e., ${\bf \hat \Phi}_{\ell,\ell,k}=0.1\times {\bf I}_N$. The second scenario is the case where the BS has no information on the error covariance matrix, i.e., ${\bf \hat \Phi}_{\ell,\ell,k}= {\bf 0}_{N}$. As can be seen in Fig. 2,  when the error covariance matrix information is absent, the sum-spectral efficiencies of RZF and the proposed GPIP decease when the SNRs are above 12dB and 15dB, respectively. These performance degradations are compensated when the error covariance matrix information is available for both RZF and GPIP.  One observation is that the proposed GPIP with the error covariance matrix information provides about 3dB gain over the RRZF method in all SNR ranges.

\subsection{System Level Simulations}
\begin {table}[t]
\caption {System Level Simulation Assumptions} \vspace{-0.5cm}\label{tab:Sys_Assumption} 
  	 \begin{center}
  	 \small
  \begin{tabular}{ l  c }
    \hline\hline
    Parameters & Value  \\ \hline
        Topology & Hexagonal 19 cells  \\ 
         Inter-BS distance & 1000m \\
         Minimum distance btw. MS and BS & 40m \\
         Carrier frequency & 2GHz \\
         Bandwidth & 20MHz \\
         BS transmission power & 40dBm \\
         Spatial channel model & One-ring scattering model in \eqref{eq:One_ring_channel_model}\\
         Path-loss model & Okumaura-Hata model \tablefootnote{With the parameters in Table \ref{tab:Sys_Assumption}, we use the path-loss model as 
$
           L_{\sf p}(D)|_{\rm dB} = 135.1047 + 35.0413 \log_{10} (D),
$
         where $D$ represents the distance in Km between user and BS antennas.}
           \\
         BS/MS height & 32m/1.5m \\
         Shadowing standard deviation & 8dB \\
         Channel estimation & Imperfect \\ \hline
  \end{tabular}
\end{center}\vspace{-1cm}
\end {table}

% \begin{figure} [t]
% 	\centering
%   \includegraphics[width=12cm]{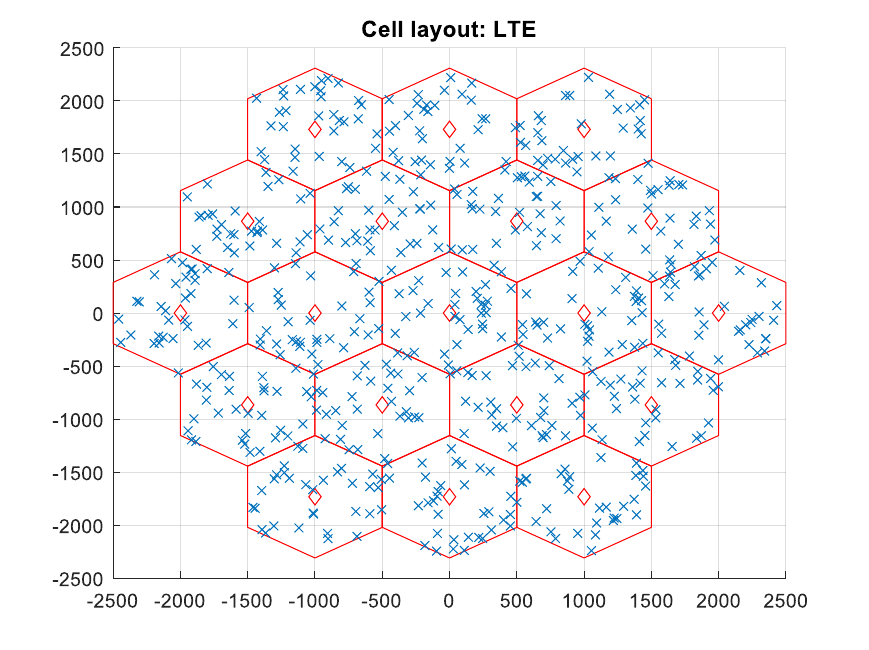}
%   \caption{An illustration of the cell-layout for system-level-simulations when $(N,K) = (64,64)$.} \label{Fig.7}
% \end{figure}

In this subsection, we evaluate system level performances of the proposed GPIP, ZF-DPC, ZF, RZF, RRZF, MRT, SUS-ZF, and rank-adaptation with ZF. The set of parameters for the system level simulations is summarized in Table \ref{tab:Sys_Assumption}. In the system level simulations, we use the imperfect CSIT model as ${\bf \hat h}_{\ell,\ell,k}={\bf h}_{\ell,\ell,k}+{\bf e}_{\ell,\ell,k}$ where 
${\bf e}_{\ell,\ell,k}\sim \mathcal{CN}\left({\bf 0}, {\bf R}_{\ell,\ell,k}- {\bf  R}_{\ell,\ell,k}  \left(  \sum_{j \notin \mathcal{C}_{\ell}}{\bf  R}_{\ell,j,k}  +\frac{\sigma^2}{\tau^{\rm ul}p^{\rm ul}}{\bf I}_{ N}\right)^{-1} {\bf  R}_{\ell,\ell,k} \right)$ as in \cite{Yin_Gesbert_2013}. Here, $\mathcal{C}_{\ell}$ denotes the set  BSs, which perform the cooperation with the $\ell$th BS, i.e., $|\mathcal{C}_{\ell}| = C - 1$. This imperfect CSIT model is valid when orthogonal pilot sequences are assigned to the users across the cooperative cells, and they are fully reused in the other cooperative cells. We set the transmission power and the pilot length of each user as $p^{\rm ul}=20$ dBm and $\tau^{\rm ul}=CK$, respectively. In addition, the stopping parameter $\epsilon$ for the proposed GPIP is set to be $0.01$.

\begin{figure} [t]
	\centering
    \includegraphics[width=12cm]{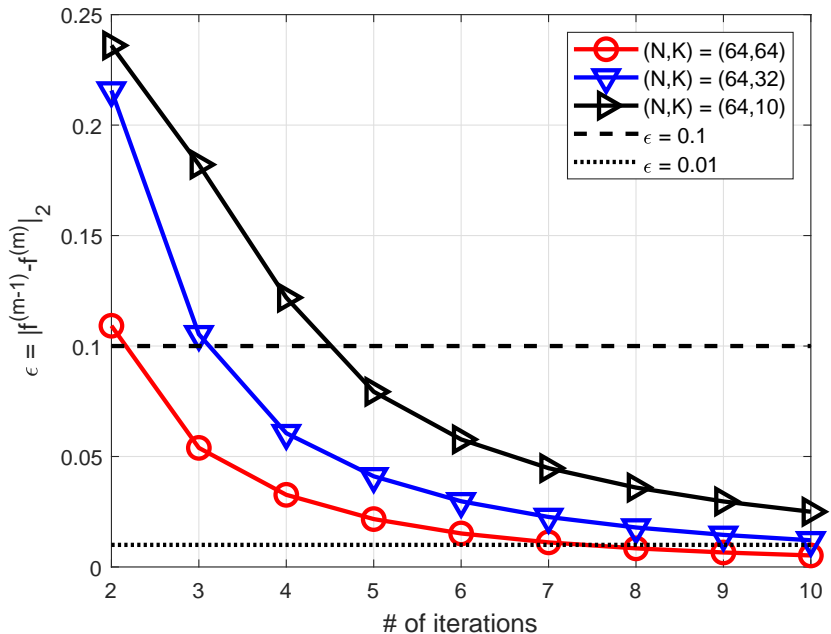}
  \caption{The convergence speed of the proposed GPIP.} \label{Fig_Sim_SLS_Convergence}
\end{figure}

{\bf Convergence:} Fig. \ref{Fig_Sim_SLS_Convergence} shows the convergence speed of the proposed GPIP algorithm in Table \ref{tab:GPI_AlG} for different settings $(N,K) = (64,10)$, $(64,32)$, and $(64,64)$. We observe that the precoding vector ${\bf f}^{(m)}$ quickly converges to the stationary point as the number of iterations increases. The convergence speed of the proposed algorithm depends on the ratio between $N$ and $K$. If we set $\epsilon = 0.1$, which is a stopping condition of the proposed algorithm, four iterations are sufficient to end the algorithm.

\begin{figure} [t]
	\centering
    \includegraphics[width=12cm]{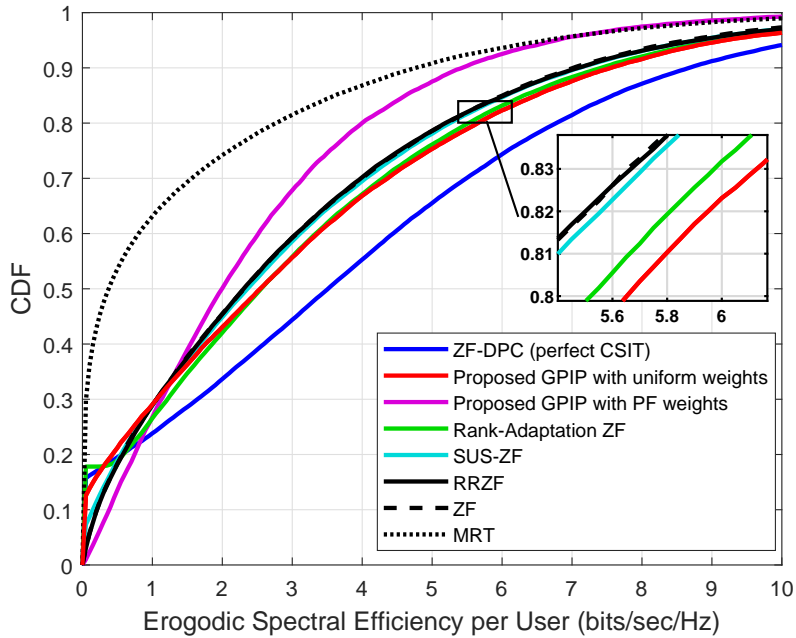}
  \caption{ Rate distributions of different non-cooperative transmission strategies when $(N,K) = (64,10)$.}  \vspace{-0.2cm}\label{Fig_Sim_SLS_Rate_CDF_K_10}
\end{figure}

{\bf Rate distributions of non-cooperative transmission methods:} Fig. \ref{Fig_Sim_SLS_Rate_CDF_K_10} illustrates the distributions of rates per user attained by different transmission methods when $(N,K) = (64,10)$. We evaluate the rate distributions of the proposed GPIP using both the uniform weight and the proportional fair (PF) weight values. As can be seen in Fig. \ref{Fig_Sim_SLS_Rate_CDF_K_10}, the proposed GPIP with the uniform weight shows the spectral efficiency gains compared to the existing precoding methods. It is observed that the proposed GPIP performs the user selection with the water-filling power control effect, because the users with bad-channel conditions are not served by BS. The similar user-selection with the water-filling effect is observed from the rate distribution of ZF-DPC. To improve the fairness of the rate distributions, the proposed algorithm is able to use PF weights. When applying the PF weights, the performance of the cell-edge users significantly improves, while degrading the performance of cell-center users. 

\begin{figure} [t]
	\centering
    \includegraphics[width=12cm]{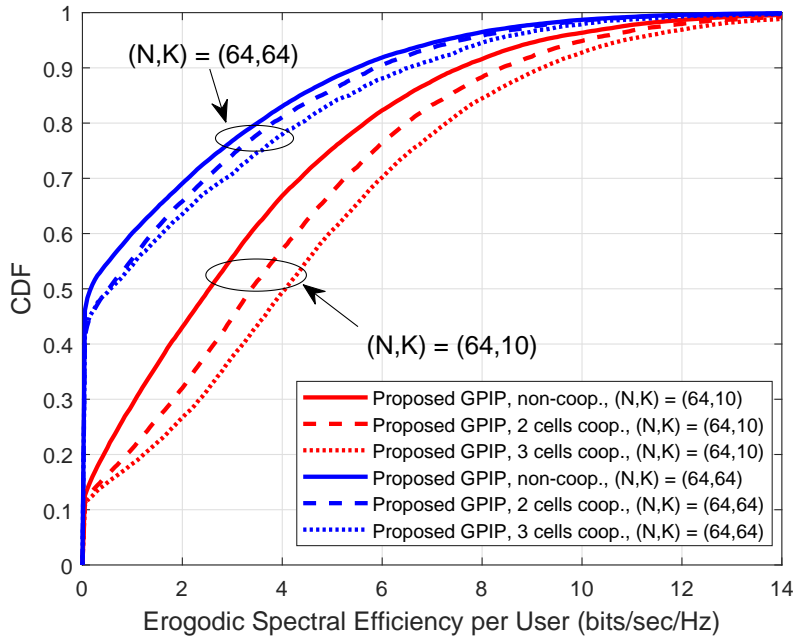}
  \caption{ Rate distributions of multi-cell operations when $(N,K) = (64,10)$ and $(64,64)$. For the proposed GPI precoding, uniform weight values are used.}\label{Fig_Sim_SLS_Rate_CDF_Comp}
\end{figure}

{\bf Rate distributions of multi-cell coordinated scheduling and precoding:} Fig. \ref{Fig_Sim_SLS_Rate_CDF_Comp} illustrates the rate distributions of multi-cell scheduling and precoding based on the proposed method when $(N,K) = (64,10)$ and $(64,64)$. We consider the scenarios where the number of cooperative BSs is two or three, i.e., $\mathcal{C}=\{2,3\}$. Compared to the non-cooperative case, the proposed GPIP for BS cooperation provides the noticeable enhancement of the per-user rate distribution performance, as the number of cooperative BSs increases. In particular, the performance enhancement becomes large when the number of BS antennas is larger than that of users. This fact shows that when the number of BS antennas is much larger than that of users, the proposed GPIP is possible to more cleverly exploit the remaining degrees of freedom of the BS antennas so as to control inter-cell interference. 

\begin{figure} [t]
	\centering
    \includegraphics[width=12cm]{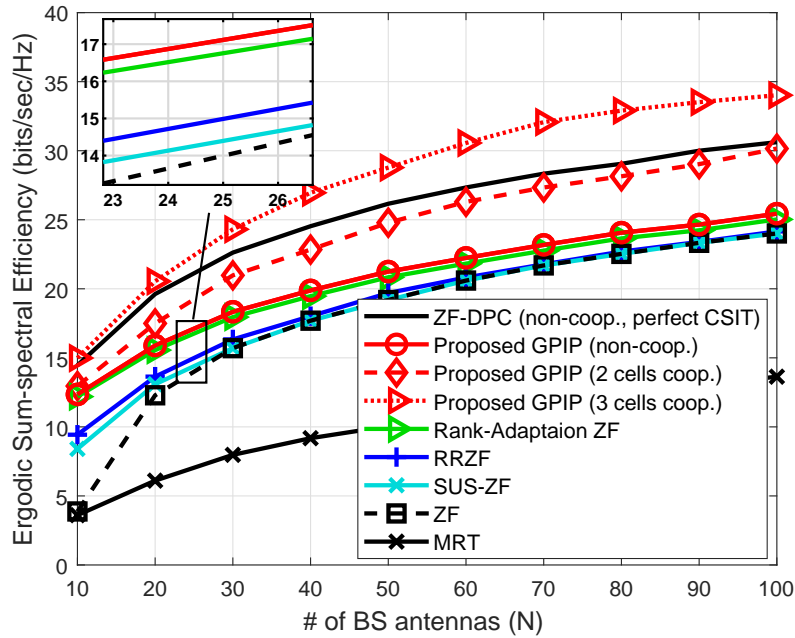}
  \caption{ Ergodic sum-spectral efficiencies when increasing the number of BS antennas for $K = 10$.} \label{Fig_Sim_SLS_Varying_Antennas}
\end{figure}
 %%%%%%%%%%%%%%%%%%%%%%%%%%%%%%%%%%%%%%%%%%%%%%%%%%%%%%%%%%%%%%%%%%%%%%%%%%%%%%%%%%%%%%%%%%%%
 
{\bf Effects of the number BS antennas:} Fig. \ref{Fig_Sim_SLS_Varying_Antennas} shows the ergodic sum-spectral efficiencies achieved by the different transmission methods when increasing the number of BS antennas, while fixing $K=10$. We observe that the proposed GPIP outperforms the existing methods regardless of the number of BS antennas, $N$. Particularly, when $(N,K)=(60,10)$, the proposed GPIP provides about 10 $\%$ spectral efficiency gain over RRZF. In other words, ZF and RRZF need more antennas to achieve the same spectral efficiency of the proposed GPI. For example, to achieve 20 bits/sec/Hz spectral efficiency, the proposed GPIP needs 10 less number of antennas than that needed by RRZF; thereby, the cost of MU-MIMO systems can be reduced by the proposed GPIP. In addition, as can be seen, the proposed cooperative BS transmission method is able to considerably improve the sum-spectral efficiencies as the number of cooperative BSs increases. 

 %%%%%%%%%%%%%%%%%%%%%%%%%%%%%%%%%%%%%%%%%%%%%%%%%%%%%%%%%%%%%%%%%%%%%%%%%%%%%%%%%%%%%%%%%%%%
 \begin{figure} [t]
	\centering
    \includegraphics[width=12cm]{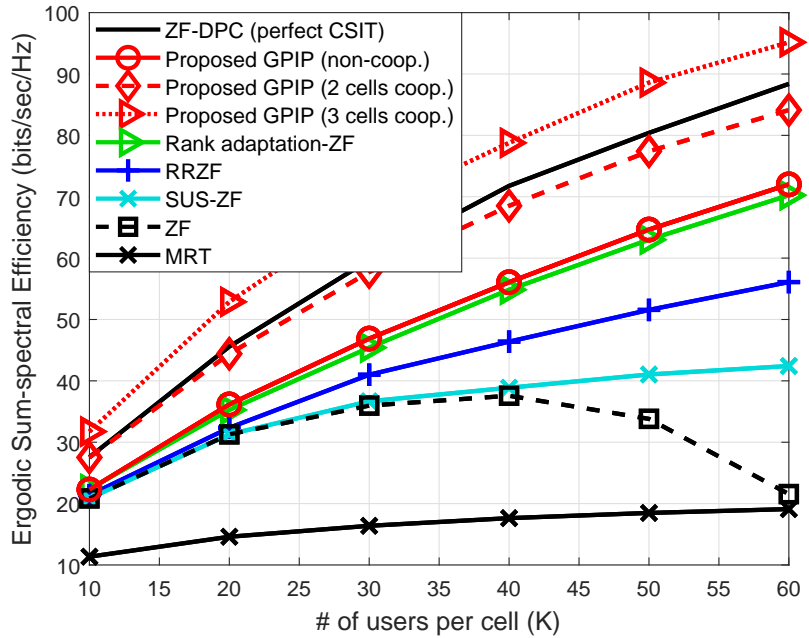}
 \caption{ Ergodic sum-spectral efficiencies when increasing the number of users per cell for  $N = 64$.}\label{Fig_Sim_SLS_Varying_Users}
\end{figure}
 %%%%%%%%%%%%%%%%%%%%%%%%%%%%%%%%%%%%%%%%%%%%%%%%%%%%%%%%%%%%%%%%%%%%%%%%%%%%%%%%%%%%%%%%%%%%
 
{\bf Effects of the number users per cell:} In Fig. \ref{Fig_Sim_SLS_Varying_Users}, we increase the number of downlink users per cell from $10$ to $60$, while fixing $N=64$. Fig. \ref{Fig_Sim_SLS_Varying_Users} shows the ergodic sum-spectral efficiencies per cell. As seen in Fig. \ref{Fig_Sim_SLS_Varying_Users}, the proposed GPI outperforms SUS-ZF, rank-adaptation with ZF, RRZF, and MRT regardless of the number of downlink users, $K$. In addition, although the sum-spectral efficiency obtained by ZF decreases beyond $K\geq 40$, the proposed GPIP improves the sum-spectral efficiency as $K$ increases. This result shows that the user selection is essential in the regime of $N/K<2$, and an additional gain is also attainable by the proposed power allocation and precoding method. In addition, it is observed that the sum-spectral efficiency improves when the number of the cooperative BSs increases. The performance enhancement becomes larger when the number of users per cell increases for the particularly three BS cooperation scenario.This is because the proposed cooperative transmission method is able to obtain a more multi-user diversity gain by performing the multi-cell scheduling as the number of users per cell increases.

\section{Conclusion}
In this paper, we proposed a new linear pre-processing method for downlink multi-cell MU-MIMO systems with imperfect CSIT. The proposed framework was to reformulate the maximization of the WSM problem into the maximization problem for the product of Rayleigh quotients. By proposing a computationally efficient algorithm that ensures the first order optimality condition, we found a sub-optimal solution of the WSM problem. The salient feature of the proposed precoding solution was to find the joint solutions for user selection, power allocation, and precoding of the cellular system. By simulations, we demonstrated that the proposed method provides considerable gains in the ergodic sum-spectral efficiencies compared to the existing precoding methods for numerous system configurations. 

One interesting direction would be to extend the proposed algorithm for mmWave systems by considering hybrid precoding methods in \cite{Amehd_Heath_2015} and joint transmission methods for C-RANs in \cite{Han_Lee}. Another interesting research direction would be to extend the proposed algorithm for non-orthogonal multiple access downlink precoding methods in \cite{Shin2017}.

% One interesting direction would be to extend the proposed optimization framework to multi-cell MIMO cooperative networks when imperfect CSIT is available. This direction could be useful in understanding the limits of linear preprocessing for multi-cell MIMO cooperative networks when perfect CSIT is unavailable. Another interesting research direction would be to extend the proposed algorithm for mmWave systems considering hybrid precoding methods in \cite{Amehd_Heath_2015,Dai_Clerckx}.

%%%%%%%%%%%%%%%%%%%%%%%%%%%%%%%%%%%%%%%%%%%%%%%%%%%%%%%%%%%%%%%%%%%%%

\end{document}